\newtheorem{proposition}{Proposition}
\newtheorem{proposition?}{Proposition?}
\newtheorem{theorem}{Theorem}
\newtheorem{lemma}{Lemma}
\newtheorem{corollary}{Corollary}
\theoremstyle{definition}
\newtheorem{remark}{Remark}
\newtheorem{example}{Example}
\newcommand{\real}{\mathbb R} 
\newcommand{\complex}{\mathbb C} 
\newcommand{\nat}{\mathbb N} 
\newcommand{\half}{\frac{1}{2}} 
\newcommand{\mo}[1]{\left| #1 \right|} 
\newcommand{\hi}{\mathcal{H}} 
\newcommand{\hik}{\mathcal{K}} 
\newcommand{\hv}{\mathcal{V}} 
\newcommand{\lh}{\mathcal{L(H)}} 
\newcommand{\eh}{\mathcal{E(H)}} 
\newcommand{\kb}[2]{|#1\rangle\langle#2|} 
\newcommand{\no}[1]{\left\|#1\right\|} 
\newcommand{\tr}[1]{\textrm{tr}\left[#1\right]} 
\newcommand{\rank}{\mathrm{rank}\,} 
\newcommand{\id}{\mathbbm{1}} 
\newcommand{\fourier}{\mathcal{F}} 
\newcommand{\salg}{\mathcal{F}} 
\newcommand{\ltwo}[1]{L^2(#1)} 
\newcommand{\coex}{\hbox{\hskip0.85mm$\circ\hskip-1.2mm\circ$\hskip0.85mm}}
\newcommand{\coexf}{\mathfrak{c}}
\newcommand{\va}{\mathbf{a}} 
\newcommand{\vb}{\mathbf{b}} 
\newcommand{\vsigma}{\boldsymbol{\sigma}} 
\newcommand{\G}{\mathsf{G}}
\newcommand{\cal}{\mathcal}
\newcommand{\algebra}{\mathcal{A}(P_1,P_2)} 
\newcommand{\spec}[1]{\sigma(#1)} 
\begin{document}
\title[]{Coexistence of effects from an algebra of two projections}

\author[Heinosaari]{Teiko Heinosaari$^\natural$}
\address{$\natural$ Turku Centre for Quantum Physics, Department of Physics and Astronomy, University of Turku}
\email{teiko.heinosaari@utu.fi}

\author[Kiukas]{Jukka Kiukas$^\flat$}
\address{$\flat$ School of Mathematical Sciences, University of Nottingham, University Park,
Nottingham, NG7 2RD, UK}
\email{jukka.kiukas@nottingham.ac.uk}

\author[Reitzner]{Daniel Reitzner$^{\dagger,\clubsuit}$}
\address{$\dagger$ Department of Mathematics, Technische Universit\"at M\"unchen, 85748 Garching, Germany}
\address{$\clubsuit$ Research Center for Quantum Information, Slovak Academy of Sciences, D\'ubravsk\'a cesta 9, 845 11 Bratislava, Slovakia}
\email{daniel.reitzner@tum.de}

\date{\today}
 
\begin{abstract}
The coexistence relation of quantum effects is a fundamental structure, describing those pairs of experimental events that can be implemented in a single setup.
Only in the simplest case of qubit effects an analytic characterization of coexistent pairs is known.
We generalize the qubit coexistence characterization to all pairs of effects in arbitrary dimension that belong to the von Neumann algebra generated by two projections. 
We demonstrate the presented mathematical machinery by several examples, and show that it covers physically relevant classes of effect pairs.
\end{abstract}

\maketitle

\section{Introduction}\label{sec:intro}

An elementary event, i.e., one that can be evaluated by two options 'yes' or 'no', in any physical experiment is of the form 'The recorded measurement outcome belongs to a set $X$.'
In quantum theory these kind of events are mathematically described as effect operators, i.e.,~positive operators bounded by the identity $\id$.
A set of effects is called coexistent if the members of the set can be measured together in a single experimental setup, meaning that they are in the range of a positive operator valued measure \cite{ABQMI85, SEO83}. 

The obvious mathematical problem related to this concept is to characterize those sets of effects that are coexistent.
The first characterization of coexistence for a special type of qubit effect pairs was given in \cite{Busch86}. 
This result was later generalized to all pairs of qubit effects \cite{StReHe08,YuLiLiOh10,BuSc10}. 
The characterization depends heavily on the geometric structure of the set of qubit effects, which is in many respects very special compared to higher dimensional effect spaces.

As we show in this paper, there is a natural setting for a generalization of the qubit coexistence characterization if one considers the special algebraic structure rather than the geometric structure of the set of qubit effects.
The von Neumann algebra of all bounded operators on $\complex^2$ is a factor of type $I_2$. Hence, the simplest choice for a generalization is a type $I_2$ von Neumann algebra, i.e., one which is a direct integral of two-dimensional matrix algebras. In finite dimensions direct integral is just a direct sum, and two matrices belonging to the same type $I_2$-algebra just means that they can be simultaneously diagonalized up to two-by-two blocks. 
This immediately suggests that the qubit characterization of coexistence generalizes to this case. 
What makes such a generalization interesting is the fact that there is a mathematically convenient and often physically relevant example of a type $I_2$ von Neumann algebra: the noncommutative part of the von Neumann algebra generated by two projections.

In this paper we generalize the qubit coexistence result to all pairs of effects belonging to an arbitrary von Neumann algebra generated by a pair of projections. Technically, the characterization is expressed as the positivity of a certain function defined on the spectrum of a fixed central element of the algebra. We wish to emphasize that the dimension of the Hilbert space is irrelevant (and can also be infinite); the result applies to any pair of effects, both belonging to the \emph{same} von Neumann algebra generated by \emph{some} fixed pair of projections.

\emph{Outline.} 
In Sect. \ref{sec:review} we recall the concept of coexistence together with some basic results and, in particular, the characterization of coexistent pairs of qubit effects.
In Sect. \ref{sec:algebra} we review the structure of the algebra of two projections and provide tools that are used in later sections.
In Sect. \ref{sec:main} we present the main result and sketch its proof.
The details of the proof are then given in Sect. \ref{sec:details}.
The use of the characterization in practice is demonstrated in Sect. \ref{sec:scaled}, and the paper concludes with a discussion on the applicability of the characterization.

\emph{Notation.}
In this paper $\hi$ is a fixed complex separable Hilbert space, either finite or infinite dimensional. 
The set of all bounded operators on $\hi$ is denoted by $\lh$. An operator $A\in\lh$ satisfying $0\leq A \leq \id$ is called an \emph{effect}, and we denote by $\eh$ the set of all effects. For each $A\in\eh$, we denote $A^\perp=\id-A$.

\section{Known conditions for coexistence}
\label{sec:review}

Quantum observables are generally described by positive operator valued measures (POVMs) \cite{PSAQT82,OQP97,MLQT12}.
Hence, from the mathematical point of view a quantum observable is a weak-* $\sigma$-additive mapping $\G: \salg\to\eh$ with $\G(\Omega)=\id$, where $\salg$ is a $\sigma$-algebra of subsets of the set $\Omega$ of possible outcomes. In the case of finite outcome set $\Omega$, observable can be identified with a function $x \mapsto \G(x)$ from $\Omega$ to $\eh$ such that $\sum_{x\in \Omega} \G(x) =\id$. 
The probability for the outcome $x\in \Omega$ to occur with system prepared in a state $\varrho$ is then given by $\tr{\varrho\G(x)}$.

\subsection{Basic facts on coexistence}

A set $\mathcal{E}_0\subset\eh$ of effects is \emph{coexistent}, or \emph{consists of coexistent effects}, if there exists an observable $\G: \salg\to\eh$ and events $X_A\in \salg$ for each $A\in\mathcal{E}_0$ such that $A=\G(X_A)$. Note that here the outcome set for $\G$ can be arbitrary.

If two effects $A$ and $B$ are coexistent, we denote $A\coex B$.
It is easy to see that $A \coex B$ if and only if there exist four effects $G_1,G_2,G_3,G_4$ such that
\begin{equation}\label{eq:G}
\begin{gathered}
G_1+G_2=A \, , \quad G_1+G_3=B  \\
G_1+G_2+G_3+G_4=\id \, .
\end{gathered}
\end{equation}
We say that the four-outcome observable consisting of effects satisfying \eqref{eq:G} is a \emph{joint observable} of $A$ and $B$.

A useful equivalent condition for the coexistence of $A$ and $B$ is that the convex set
\begin{equation}\label{coexistenteffects}
\mathcal{C}(A,B):=\{ G \in \lh \mid 0\leq G \leq A, \, A+B -\id \leq G  \leq B\}
\end{equation}
is non-empty. Each $G$ from this set can be identified with $G_1$ from (\ref{eq:G}); the set defining conditions then ensure the existence of $G_2$, $G_3$ and $G_4$. Let us note that the set $\mathcal{C}(A,B)$ is weak-* closed as it is defined by operator inequalities and the cone of positive operators is weak-* closed.
Furthermore, it is subset of $\cal B$, the closed unit ball of $\lh$.
In conclusion, $\mathcal{C}(A,B)$ is a compact convex subset of the compact Polish space $\cal B$.\footnote{The closed unit ball $\cal B$ of $\lh$ is a compact Polish space with respect to the weak-* topology, i.e., it is completely metrizable and separable. 
Indeed, the unit ball of the dual of a separable Banach space is always weak-* compact and metrizable. By compactness, the metric is then complete. Weak-* separability follows easily from the separability of the Hilbert space and the fact that weak-* topology coincides with the weak operator topology on the unit ball.}
This topological aspect leads to a useful observation; all weak-* limits of coexistent effect pairs are coexistent.

\begin{proposition}\label{limitcoexlemma} 
Let $(A_n)$ and $(B_n)$ be two sequences of effects, converging in the weak-* topology to effects $A$ and $B$, respectively. If $A_n\coex B_n$ for each $n\in \nat$, then $A\coex B$.
\end{proposition}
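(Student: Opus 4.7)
The plan is to use the reformulation $A \coex B \iff \mathcal{C}(A,B) \neq \emptyset$ established around equation \eqref{coexistenteffects}. Since $A_n \coex B_n$, for each $n$ I can choose some $G_n \in \mathcal{C}(A_n,B_n)$; in particular $0 \leq G_n \leq \id$, so each $G_n$ lies in the closed unit ball $\cal B$ of $\lh$.

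First I invoke the topological content of the footnote: $\cal B$ equipped with the weak-* topology is compact and metrizable, hence sequentially compact. Therefore the sequence $(G_n)$ admits a weak-* convergent subsequence $G_{n_k} \to G$ for some $G \in \cal B$. Next I pass to the limit in the four operator inequalities that express membership of $G_{n_k}$ in $\mathcal{C}(A_{n_k},B_{n_k})$, namely
\[
0 \leq G_{n_k}, \qquad G_{n_k} \leq A_{n_k}, \qquad A_{n_k}+B_{n_k}-\id \leq G_{n_k}, \qquad G_{n_k} \leq B_{n_k}.
\]
Each inequality of the form $X \leq Y$ between self-adjoint operators is equivalent to $\ip{\psi}{(Y-X)\psi} \geq 0$ for every $\psi \in \hi$. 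Since $G_{n_k} \to G$, $A_{n_k} \to A$, and $B_{n_k} \to B$ weak-*, the scalars $\ip{\psi}{G_{n_k}\psi}$, $\ip{\psi}{A_{n_k}\psi}$, $\ip{\psi}{B_{n_k}\psi}$ converge pointwise in $\psi$ to their unsubscripted counterparts, and nonnegativity is preserved under real limits. Hence $G$ satisfies the defining inequalities of $\mathcal{C}(A,B)$, so $\mathcal{C}(A,B)$ is non-empty and $A \coex B$.

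Main obstacle: there is essentially none of substance; the only step requiring care is the sequential extraction of a weak-* convergent subsequence, which is precisely what the metrizability claim in the footnote provides. Without that metrizability one would have to work with nets, but the conclusion would still go through because the rest of the argument is simply the weak-* continuity of the quadratic forms $X \mapsto \ip{\psi}{X\psi}$ together with the order-closedness of the positive cone.
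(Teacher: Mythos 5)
Your proof is correct and follows essentially the same route as the paper's: pick $G_n\in\mathcal{C}(A_n,B_n)$, extract a weak-* convergent subsequence using compactness and metrizability of the unit ball, and pass to the limit in the defining operator inequalities since positivity is preserved under weak-* convergence. The only difference is that you spell out the preservation of positivity via the quadratic forms $\psi\mapsto\ip{\psi}{(Y-X)\psi}$, which the paper leaves implicit.
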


\begin{proof} For each $n$, there exists $G_n\in \mathcal{C}(A_n,B_n)$. 
Since the unit ball $\cal B$ is weak-* compact and metrizable, there exists a subsequence $(G_{n_k})$ converging to some $G\in \cal B$ in the weak-* topology. Since $0\leq G_{n_k}\leq A_{n_k}$, $0\leq G_{n_k}\leq B_{n_k}$, and $A_{n_k}+B_{n_k} -\id \leq G_{n_k}\leq B_{n_k}$ for each $k$, and positivity is preserved in weak-* convergence, it follows that $G\in \mathcal{C}(A,B)$.
\end{proof}

We recall that the \emph{generalized infimum} of two effects $A$ and $B$ is defined as
\begin{align}
A \sqcap B := \half (A + B - \mo{A-B} ) \, .
\end{align}
A recently found \cite{Heinosaari13} simple sufficient condition for the coexistence of $A$ and $B$ is the following:
\begin{equation}
\label{eq:ginf}
\tag{GINF}
(A \sqcap B \geq 0\text{ and }A^\perp \sqcap B^\perp \geq 0)\text{ or }(A \sqcap B^\perp \geq 0\text{ and }A^\perp \sqcap B \geq 0) \, .
\end{equation}
It is easy to see that the following two well-known sufficient conditions for coexistence are special cases of \eqref{eq:ginf}:
\begin{itemize}
\item $AB=BA$ \quad (\emph{commutativity})
\item $A \leq B \quad \textrm{or} \quad B\leq A \quad \textrm{or} \quad A \leq B^\bot \quad \textrm{or} \quad B^\bot \leq A$ \quad  (\emph{comparability})
\end{itemize}
As shown in \cite{Heinosaari13} and further demonstrated later in this paper, \eqref{eq:ginf} covers much wider class of coexistent effects than commutativity and comparability together.

\subsection{Coexistence of two qubit effects}

As explained in the introduction, our characterization of coexistence of two effects belonging to the algebra of two projections is done by reducing the problem to two-dimensional cases.
For this reason we review the complete characterization of coexistence of qubit effects \cite{StReHe08,YuLiLiOh10,BuSc10}. 
In this subsection $\hi=\complex^2$.

Qubit effects can be parametrized by vectors $(\alpha,\va)\in\real\times\real^3$ in the following way:
\begin{align}\label{definition of A}
A =\frac{1}{2}(\alpha I+\va\cdot\vsigma),\qquad \no{\va} \leq \alpha \leq 2-\no{\va} \, .
\end{align}
Here $\boldsymbol{\sigma}\equiv(\sigma_x,\sigma_y,\sigma_z)$ is the vector of the Pauli matrices.
Note that from $\no{\va} \leq \alpha \leq 2-\no{\va}$ follows that $\no{\va}\leq 1$. 
We say that $A$ is \emph{unbiased} if $\alpha=1$; otherwise we call $A$ \emph{biased}.

We are considering the conditions for the coexistence of two qubit effects $A =\frac{1}{2}(\alpha \id +\va\cdot\vsigma)$ and $B =\frac{1}{2}(\beta \id+\vb\cdot\vsigma)$.
To write down the general coexistence condition, we denote 
\[
\langle A|B\rangle := \alpha\beta-\va\cdot\vb,
\]
and define a function $\coexf:\eh\times\eh\to\real$ by
\begin{align*}\label{coexfunc}
\coexf(A,B)&:=\left[ \langle A|A\rangle\langle A^\perp|A^\perp\rangle\langle B|B\rangle\langle B^\perp|B^\perp\rangle \right]^{1/2}\\
&-\langle A|A^\perp\rangle\langle B|B^\perp\rangle+\langle A|B^\perp\rangle\langle A^\perp|B\rangle+\langle A|B\rangle\langle A^\perp|B^\perp\rangle.
\end{align*}
Some immediate properties of the function $\coexf$ are:
\begin{itemize}
\item $\coexf(B,A)=\coexf(A,B)$ (\emph{symmetry})
\item $\coexf(A^\perp,B)=\coexf(A^\perp,B^\perp)=\coexf(A,B)$ (\emph{invariance under complements})
\end{itemize}
However, the function $\coexf$ does not seem to have any direct physical interpretation.

The coexistence relation in $\mathcal{E}(\complex^2)$ can be characterized by a single inequality:

\begin{theorem}[\cite{BuSc10}]\label{qubitcoexthm} 
$A\coex B$ if and only if $\coexf(A,B)\geq 0$.
\end{theorem}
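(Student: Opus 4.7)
The plan is to translate the coexistence problem $\mathcal{C}(A,B) \neq \emptyset$ into an explicit feasibility question about intersections of Lorentz cones and then identify $\coexf(A,B) \geq 0$ as the algebraic obstruction to that feasibility.

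First, I would parametrize a candidate $G_1 \in \mathcal{C}(A,B)$ in Bloch form as $G_1 = \half(\gamma \id + \vg \cdot \vsigma)$ with $(\gamma, \vg) \in \real \times \real^3$, and set $G_2 = A - G_1$, $G_3 = B - G_1$, $G_4 = \id - A - B + G_1$ as in \eqref{eq:G}. Then $A \coex B$ is equivalent to the existence of $(\gamma, \vg)$ satisfying the four Lorentz-cone conditions
\begin{align*}
\no{\vg} &\leq \gamma, & \no{\va - \vg} &\leq \alpha - \gamma, \\
\no{\vb - \vg} &\leq \beta - \gamma, & \no{\va + \vb - \vg} &\leq 2 - \alpha - \beta + \gamma,
\end{align*}
which are precisely the Bloch-form positivity conditions for $G_1, G_2, G_3, G_4$.

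Second, I would treat $\gamma$ as a real parameter running in the admissible interval forced by the scalar parts and reduce the vector part to the nonemptiness of the intersection of four closed balls in $\real^3$ with centers $\vnull, \va, \vb, \va+\vb$ and radii $\gamma, \alpha-\gamma, \beta-\gamma, 2-\alpha-\beta+\gamma$. The four centers form a parallelogram and hence lie in a single affine $2$-plane $P$; reflection symmetry across $P$ shows that a ball intersection in $\real^3$ is nonempty iff the corresponding disk intersection in $P$ is. Helly's theorem in $\real^2$ then reduces the problem to the four triple-intersection conditions. The pairwise preparatory conditions split into four ``side'' pairs of the parallelogram (giving the automatic inequalities $\no{\va}\leq\alpha$, $\no{\vb}\leq\beta$, etc., which hold since $A, B \in \eh$) and two ``diagonal'' pairs $(G_1,G_4)$ and $(G_2,G_3)$, which pin down the admissible range of $\gamma$.

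Third, I would rewrite all the resulting inequalities using the Minkowski-type bracket $\langle X | Y \rangle = \gamma_X \gamma_Y - \vx \cdot \vy$ of the paper; each pairwise ball condition becomes a bracket inequality between the corresponding Bloch $4$-vectors, and each triple-intersection condition becomes a quadratic inequality in $\gamma$. Optimizing jointly over $\gamma$ by completing the square should identify the square-root term $[\langle A|A\rangle \langle A^\perp|A^\perp\rangle \langle B|B\rangle \langle B^\perp|B^\perp\rangle]^{1/2}$ as a discriminant and assemble the remaining brackets into $-\langle A|A^\perp\rangle \langle B|B^\perp\rangle + \langle A|B^\perp\rangle \langle A^\perp|B\rangle + \langle A|B\rangle \langle A^\perp|B^\perp\rangle$, collapsing the whole list to the single inequality $\coexf(A,B) \geq 0$. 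The symmetries $\coexf(B,A) = \coexf(A,B)$ and $\coexf(A^\perp,B) = \coexf(A,B)$ serve as a consistency check, because they correspond to relabellings of $G_1,\dots,G_4$ induced respectively by swapping $A \leftrightarrow B$ and by replacing $A$ with $A^\perp$.

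The main obstacle will be the algebraic consolidation in the third step: while the convex-geometric picture makes feasibility conceptually transparent, matching the intricate scalar combination defining $\coexf$ to a single extremum of a $\gamma$-quadratic requires a careful, symmetric computation. A practical simplification is to exploit the rotational freedom in $\real^3$ to place $\va$ along the $z$-axis and $\vb$ in the $xz$-plane, reducing to only a few real parameters; the Bloch-ball symmetries together with the bracket identities should then make the emergence of $\coexf$ tractable.
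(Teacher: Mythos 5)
The paper does not prove this statement at all: Theorem \ref{qubitcoexthm} is imported from the literature (the citation \cite{BuSc10}, with the geometric predecessors \cite{StReHe08,YuLiLiOh10}), so there is no internal proof to compare yours against; what can be judged is whether your outline would constitute a proof on its own. Your first two steps are correct and standard: positivity of $\half(c\id+\vc\cdot\vsigma)$ is $\no{\vc}\leq c$, so $\mathcal{C}(A,B)\neq\emptyset$ is exactly the existence of $(\gamma,\vg)$ satisfying your four Lorentz-cone conditions; the reflection argument across the plane of the parallelogram $\vnull,\va,\vb,\va+\vb$ legitimately reduces to disks in that plane (with a separate word needed for the degenerate collinear case), and Helly in the plane reduces, for each fixed $\gamma$, to triple intersections, while the two ``diagonal'' pairwise conditions pin down the admissible $\gamma$-interval. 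This is essentially the feasibility formulation that was already known before the single-inequality characterization existed.

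The genuine gap is your third step, which you yourself flag as the obstacle but leave entirely at the level of ``should'': deriving the single inequality $\coexf(A,B)\geq 0$ from the $\gamma$-dependent feasibility problem is precisely the content of the theorem, and it is not a one-pass completion of a square. After Helly you have four triple-disk conditions, each of which is itself a nontrivial inequality in $\gamma$ (not a single quadratic whose discriminant you can read off), and you must then show that the four resulting $\gamma$-windows, together with the diagonal pairwise windows, have a common point exactly when $\coexf(A,B)\geq 0$; the square-root term $\left[\langle A|A\rangle\langle A^\perp|A^\perp\rangle\langle B|B\rangle\langle B^\perp|B^\perp\rangle\right]^{1/2}$ does not emerge as an obvious discriminant of any one of them. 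Historically this consolidation took dedicated work (\cite{YuLiLiOh10,BuSc10}) after the geometric reformulation of \cite{StReHe08} was available, which is a strong indication that the step you defer is where the proof actually lives. You also need to check the boundary/degenerate cases explicitly (e.g.\ $\alpha=\no{\va}$ or $\alpha=2-\no{\va}$, parallel $\va,\vb$, and the endpoints of the $\gamma$-interval where some radius vanishes), since the claimed equivalence must hold there too and these are exactly the cases where discriminant-style arguments tend to break. As it stands, the proposal is a plausible program rather than a proof.
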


The form of the function $\coexf$ simplifies for some special classes of effects, hence leading to an easier coexistence criterion.
For instance, 
\begin{itemize}
\item if $A,A^\perp,B$ or $B^\perp$ is rank-1 then 
\begin{equation*}
\coexf(A,B)=-\langle A|A^\perp\rangle\langle B|B^\perp\rangle+\langle A|B^\perp\rangle\langle A^\perp|B\rangle+\langle A|B\rangle\langle A^\perp|B^\perp\rangle \, .
\end{equation*}
\item if $A$ and $B$ are unbiased then
\begin{equation*}
\coexf(A,B)=\langle A|A\rangle\langle B|B\rangle-\langle A|A^\perp\rangle\langle B|B^\perp\rangle+\langle A|B^\perp\rangle^2+\langle A|B\rangle^2 \, .
\end{equation*}
\end{itemize}
In the latter case the coexistence criterion was first derived in \cite{Busch86}. 
We also recall from \cite{Heinosaari13} that for two unbiased effects their coexistence is equivalent to \eqref{eq:ginf}. 
However, the coexistence of two arbitrary qubit effects is not equivalent to \eqref{eq:ginf}.

\subsection{From smaller to larger}
\label{sec:smaller2larger}

As explained in the introduction, the idea of the present paper is to extend the qubit coexistence result to higher dimensional Hilbert spaces. 
This in mind, it is appropriate to review some quite direct ways how coexistence results for effects in a smaller Hilbert space can be used to get information on the coexistence of some related effects in a larger Hilbert space, and vice versa.

Let us first consider the passage from finite-dimensional Hilbert spaces to infinite-dimensional.

\begin{proposition}\label{prop:sequence}
Let $\hi$ be an infinite dimensional Hilbert space and $A,B\in\eh$.
Let $(\hv_n)$ be an increasing sequence of closed subspaces of $\hi$ such that $\cup_n \hv_n = \hi$, and for each $n\in\nat$ we denote by $V_n:\hv_n\to \hi$ the canonical isometry. 
Then $A\coex B$ if and only if $V_{n}^*A V_{n}\coex V_{n}^*BV_{n}$ as elements of $\mathcal{E}(\hv_n)$, for each $n$.
\end{proposition}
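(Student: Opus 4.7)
My plan is to handle the two implications separately, with Proposition \ref{limitcoexlemma} doing the heavy lifting in the non-trivial direction.

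The forward implication is essentially bookkeeping. Given $G\in\mathcal{C}(A,B)$, I would set $G_n := V_n^* G V_n \in \mathcal{L}(\hv_n)$. Conjugating the defining inequalities in \eqref{coexistenteffects} by the isometry $V_n$ and using $V_n^* V_n = \id_{\hv_n}$ turns them into the corresponding inequalities for $\mathcal{C}(V_n^* A V_n, V_n^* B V_n)$, so $G_n$ is a witness of coexistence on $\hv_n$.

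For the converse, the natural route is to lift witnesses from $\hv_n$ back to $\hi$, assemble an approximating sequence of coexistent pairs on $\hi$ itself, and then pass to the limit. Writing $P_n := V_n V_n^*$ for the projection onto $\hv_n$ and setting $A_n := P_n A P_n$, $B_n := P_n B P_n$, I would take a witness $G_n \in \mathcal{C}(V_n^* A V_n, V_n^* B V_n)$ and define $\tilde G_n := V_n G_n V_n^* \in \lh$. Conjugating the inequalities for $G_n$ by $V_n$ from the opposite side yields $0 \leq \tilde G_n \leq A_n$, $\tilde G_n \leq B_n$, and $A_n + B_n - P_n \leq \tilde G_n$; since $P_n \leq \id$ the last inequality upgrades to $A_n + B_n - \id \leq \tilde G_n$, giving $\tilde G_n \in \mathcal{C}(A_n, B_n)$ and hence $A_n \coex B_n$.

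To finish, I would observe that $\overline{\bigcup_n \hv_n} = \hi$ forces $P_n \to \id$ strongly, and hence $A_n \to A$ and $B_n \to B$ strongly, a fortiori in the weak-* operator topology, at which point Proposition \ref{limitcoexlemma} concludes $A \coex B$. I do not foresee a real obstacle; the only subtle point is recognising that the naturally produced bound $A_n + B_n - P_n \leq \tilde G_n$ is actually stronger than the required $A_n + B_n - \id \leq \tilde G_n$ because $P_n \leq \id$, which is what ensures that the lifted operators $\tilde G_n$ genuinely certify coexistence of the compressed-but-not-restricted effects $A_n, B_n$ on the full space and thus put Proposition \ref{limitcoexlemma} within reach.
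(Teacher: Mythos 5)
Your proposal is correct and follows essentially the same route as the paper: compress a witness by $V_n$ for the forward direction, and for the converse lift witnesses to $A_n=P_nAP_n$, $B_n=P_nBP_n$ and apply Proposition \ref{limitcoexlemma} after noting $P_n\to\id$ strongly. The only difference is that you spell out explicitly (via $\tilde G_n=V_nG_nV_n^*$ and the upgrade from $A_n+B_n-P_n\leq\tilde G_n$ to $A_n+B_n-\id\leq\tilde G_n$) what the paper compresses into the remark that the operators act trivially outside the subspace.
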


\begin{proof} If $A\coex B$, when there exists a $G\in \mathcal{C}(A,B)$, then by positivity and linearity, $V_n^*GV_n\in \mathcal{C}(V_{n}^*A V_{n},V_{n}^*BV_{n})$, so that $V_{n}^*A V_{n}\coex V_{n}^*BV_{n}$ (as elements of $\mathcal{E}(\hv_n)$). Concerning the other direction, we first note that the projections $P_n:=V_{n}V_{n}^*$ converge strongly to the identity, so the effects $A_n:=P_nAP_n$ and $B_n:=P_nBP_n$ converge in the weak-* topology to $A$ and $B$, respectively. Assuming that $V_{n}^*A_nV_{n}\coex V_{n}^*B_{n}V_{n}$ (as elements of $\mathcal{E}(\hv_n)$), it follows that $A_n\coex B_n$ (as elements of $\mathcal{E}(\hi)$), because they act trivially outside the subspace. Hence, it follows from Prop. \ref{limitcoexlemma} that $A\coex B$.
\end{proof}

We conclude from Prop. \ref{prop:sequence} that  coexistence in any separable Hilbert space can in principle be reduced to coexistence in its finite-dimensional subspaces. Next, we look at a basic way of obtaining larger finite-dimensional Hilbert spaces from small ones, namely via the tensor product.

Let us consider a Hilbert space $\hi_1 \otimes \hi_2$ and two pairs of effects: $A_1,B_1\in\mathcal{E}(\hi_1)$ and  $A_2,B_2\in\mathcal{E}(\hi_2)$.
Then $A_1\coex B_1$ and $A_2 \coex B_2$ imply that $A_1\otimes A_2 \coex B_1 \otimes B_2$. Indeed, suppose that $i\mapsto G(i)\in\mathcal{E}(\hi_1)$ is a joint observable for $A_1$ and $B_1$ and  $j\mapsto H(j)\in\mathcal{E}(\hi_2)$ for $A_2$ and $B_2$.
Then clearly the observable $(i,j)\mapsto G(i) \otimes H(j)$ contains the effects $A_1\otimes A_2$ and $B_1 \otimes B_2$ in its range.

Perhaps surprisingly, the converse is not true: we may have $A_1\otimes A_2 \coex B_1 \otimes B_2$ even if $A_1\coex B_1$ and $A_2 \coex B_2$ do not hold.
This is demonstrated in the following example.

\begin{example}\label{ex:tensor}
Define four qubit effects: $A_1=A_2=\frac{1}{2\sqrt{2}} (\id + \sigma_z)$ and $B_1=B_2=\frac{1}{2\sqrt{2}} (\id + \sigma_x)$.
From Theorem \ref{qubitcoexthm} follows that $A_1$ and $B_1$ are not coexistent.
However, $A_1\otimes A_2$ and $B_1 \otimes B_2$ are coexistent since $A_1\otimes A_2 + B_1 \otimes B_2 \leq \id$.
\end{example}

Hence, coexistence in a tensor product Hilbert space cannot be characterised in terms of coexistence in the tensor factors, even when the effects are in the product form. Accordingly, the qubit characterisation only provides a sufficient condition for coexistence of two product form effects in a Hilbert space of the form $\hi = \otimes_{i=1}^n \complex^2$.
In contrast to this fact, it is easy to see that coexistence is respected by \emph{direct sum} decompositions.

\begin{proposition}\label{discretedirectsum} Let
$$
\hi = \bigoplus_{i=1}^n \hi_i
$$
denote the direct sum of Hilbert spaces $\hi_i$, and let $A_i,B_i\in \mathcal{E}(\hi_i)$ for each $i=1,\ldots,n$. 
Then the effects
\begin{align*}
A &:=\bigoplus_{i=1}^n A_i, & B &:=\bigoplus_{i=1}^n B_i
\end{align*}
are coexistent if and only if $A_i\coex B_i$ for each $i=1,\ldots,n$. In that case, $A$ and $B$ have a joint observable whose four effects are decomposable in this direct sum.
\end{proposition}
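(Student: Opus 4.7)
The easy direction is to observe that if $A_i \coex B_i$ for each $i$, then picking $G_i \in \mathcal{C}(A_i, B_i)$ and forming $G := \bigoplus_i G_i$, the four defining operator inequalities of $\mathcal{C}(A, B)$ from \eqref{coexistenteffects} hold blockwise, so $G \in \mathcal{C}(A, B)$ and thus $A \coex B$.

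For the nontrivial direction, suppose $A \coex B$ and fix some $G \in \mathcal{C}(A, B)$. A priori $G$ need not be block-diagonal, but I would force it to be by averaging over the block structure. Let $P_i \in \lh$ denote the orthogonal projection onto $\hi_i$ regarded as a subspace of $\hi$, and set
$$G' := \sum_{i=1}^n P_i G P_i.$$
Because $A$ and $B$ are block-diagonal we have $\sum_i P_i A P_i = A$ and $\sum_i P_i B P_i = B$, and of course $\sum_i P_i \id P_i = \id$. Sandwiching each of the inequalities $0 \leq G$, $G \leq A$, $G \leq B$, and $A + B - \id \leq G$ between $P_i$ on both sides preserves positivity, and summing over $i$ gives exactly $0 \leq G' \leq A$, $G' \leq B$, and $A + B - \id \leq G'$. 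Hence $G' \in \mathcal{C}(A, B)$, and $G'$ is manifestly block-diagonal. Writing $G' = \bigoplus_i G_i'$ with $G_i' \in \mathcal{E}(\hi_i)$, the blockwise inequalities then yield $G_i' \in \mathcal{C}(A_i, B_i)$, so $A_i \coex B_i$ for every $i$.

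For the final sentence of the statement, the four-effect joint observable derived from $G'$ via \eqref{eq:G}, namely $(G', A - G', B - G', \id - A - B + G')$, is by construction built from block-diagonal effects, so it decomposes as a direct sum of four-outcome observables on the summands $\hi_i$.

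I do not expect a genuine obstacle here: the core insight is simply that the averaging map $X \mapsto \sum_i P_i X P_i$ is a completely positive, unital conditional expectation onto the block-diagonal subalgebra that fixes both $A$ and $B$, hence maps $\mathcal{C}(A,B)$ into its block-diagonal subset without leaving it. The only thing to double-check is the lower bound $A + B - \id \leq G'$, and this goes through precisely because both $A$, $B$, and $\id$ are themselves fixed by the averaging.
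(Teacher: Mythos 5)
Your proof is correct and follows essentially the same route as the paper: the forward direction is identical, and your pinching $G'=\sum_i P_i G P_i$ is just the direct sum of the paper's blockwise compressions $V_i^* G V_i$, with the same key observation that conjugation by the block projections preserves positivity while fixing $A$, $B$, and $\id$. The only cosmetic difference is that you first build a block-diagonal element of $\mathcal{C}(A,B)$ and then read off its blocks, whereas the paper compresses to each summand directly.
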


\begin{proof} If $G_i\in\mathcal C(A_i, B_i)$ for each $i$, it is clear that $G:=\oplus_{i=1}^n G_i \in \mathcal C(A, B)$, because positivity is preserved in a direct sum. The effect $G$ is decomposable, so all the effects of the corresponding joint observable given by \eqref{eq:G} are decomposable as well. For the converse part, let $V_i:\hi_i\to \hi$ be the canonical isometry, so that e.g. $V_i^*AV_i =A_i$ for each $i$. Assuming the existence of $G\in\mathcal{C}(A,B)$, it then follows by positivity and linearity that $V_i^*GV_i\in \mathcal{C}(A_i,B_i)$, so that $A_i\coex B_i$. 
\end{proof}

\begin{remark}
 We generalise Prop. \ref{discretedirectsum} to direct integrals in Sect. \ref{sec:details}, Prop. \ref{dirintcoex}. The generalisation is needed for the infinite-dimensional case, which is relevant because of applications to continuous variable systems, in particular, binarizations of the canonical variables. The proof requires some measure theoretic technicalities, which we do not consider essential at this point.
\end{remark}

We conclude from this discussion that direct sums, unlike tensor products, behave neatly under the coexistence relation.
Therefore, direct sums seem more appropriate way to generalize the qubit coexistence characterization to higher dimensions than tensor products. Explicitly, the above proposition has the following consequence: in a Hilbert space of even dimension, the coexistence of any two effects which can be simultaneously diagonalised up to two by two blocks, can be checked using the qubit characterisation. As mentioned in the introduction, an interesting class of such effects is given by the algebra of two projections. 
This is the topic of the next section.

\section{Algebra of two projections}
\label{sec:algebra}

We will utilize the known structure of the algebra generated by the two projections. This algebra has been studied extensively because of its relative simplicity and wide range of applications \cite{PTLO66,Pedersen68,Halmos69,DaKa70,GiKu71,RaSi89,AvSeSi94,Borac95,Vasilevski98,KiWe10}.

\subsection{Definition and basic properties}

Let $P_1$ and $P_2$ be two projections on the Hilbert space $\hi$. These remain fixed throughout the paper. 
We denote by $\algebra$ the von Neumann algebra generated by $P_1$, $P_2$ and the identity operator $\id$ on $\hi$. Hence, $\algebra$ contains all polynomials of $P_1$ and $P_2$, as well as their weak-* limits. In particular, spectral projections of any selfadjoint polynomial are in $\algebra$.

We let $[P_1,P_2]:=P_1P_2-P_2P_1$ denote the commutator of $P_1$ and $P_2$.
The closed subspace $\ker \bigl( [P_1,P_2] \bigr)$ of $\hi$ is called the \emph{commutation domain} of $P_1$ and $P_2$;
if $\psi\in \ker \bigl( [P_1,P_2] \bigr)$, then $P_1P_2\psi=P_2P_1\psi$.
Even more, the restrictions to the commutation domain of any two elements from $\algebra$ commute. 
If $\ker \bigl( [P_1,P_2] \bigr)=\{0\}$, then we say that $P_1$ and $P_2$ are \emph{totally non-commutative}.

An important subspace for our investigation is the support of the commutator, and we denote
\begin{equation}\label{eq:supp}
\hik := \mathrm{supp} \bigl( [P_1,P_2] \bigr) \equiv \ker \bigl( [P_1,P_2] \bigr)^\perp \, .
\end{equation}
Then $\hik^\perp=\ker \bigl( [P_1,P_2] \bigr)$ and
$$
\hi=\hik\oplus \hik^\perp.
$$
Clearly, $\hik$ and $\hik^\perp$ are invariant subspaces for both $P_1$ and $P_2$. 

For the following developments, it is useful to identify some specific central elements of $\algebra$, i.e., operators that commute with all elements in $\algebra$.
First, we denote 
\begin{align}
\label{eq:centralC}
C  :=\id -(P_1-P_2)^2
\end{align}
and observe two alternative ways to write it:
\begin{align}\label{eq:centralC-2}
C  = P_1P_2P_1 + P_1^\perp P_2^\perp P_1^\perp    = P_2P_1P_2 + P_2^\perp P_1^\perp P_2^\perp  \, .  
\end{align}
It is clear that $C \in \algebra$ and it is also easy to see by a direct computation that $C$ commutes with $P_1$ and $P_2$. 
Since 
\begin{align*}
-\id \leq -P_2\leq P_1-P_2\leq P_1\leq \id \, , 
\end{align*}
it follows that $C$ is an effect.

Two more useful central elements, denoted by $P_{\mathcal{K}}$ and $P_{\mathcal{K}^\perp}$, are the projections onto $\hik$ and $\hik^\perp$, respectively.
Then $P_{\mathcal{K}^\perp}=P_{\mathcal{K}}^\perp$.
Since $\hik$ and $\hik^\perp$ are invariant subspaces for both $P_1$ and $P_2$, it follows that $P_{\mathcal{K}}$ and $P_{\mathcal{K}^\perp}$ commute with $P_1$ and $P_2$.
Moreover, $P_{\mathcal{K}^\perp}$ is a spectral projection of the operator $-[P_1,P_2]^2=CC^\perp$ corresponding to the eigenvalue $0$.
Since $CC^\perp\in\algebra$, we also have $P_{\mathcal{K}^\perp}\in \algebra$.
In conclusion, $P_{\mathcal{K}}$ and $P_{\mathcal{K}^\perp}$ are central projections of $\algebra$.

Since the elements of $\algebra$ act trivially in the commutation domain, the nontrivial structure is contained in the von Neumann algebra $P_{\mathcal K}\algebra$. This is always type $I_2$, and its center is generated by a single element; this element can be chosen to be $P_{\mathcal{K}}C$ or a suitable function thereof (see e.g. \cite{Borac95}). Hence, $P_{\mathcal K}\algebra$ is isomorphic to $L^\infty(I,\mu,\mathcal L(\complex^2))$, where $I$ is some bounded interval of the real line, and $\mu$ is given by the spectral representation of the central element. It is often convenient to obtain a concrete Hilbert space representation for this isomorphism, and we will do this next.

\subsection{Extracting the qubit}

We now review the way of representing the nontrivial part $P_\hik\algebra$ of $\algebra$ as a tensor product of a commutative part, generated by a single central element, and one qubit algebra.

This representation can be explicitly realized on the Hilbert space level; the starting point is to restrict to the subspace $\hik$, defined in \eqref{eq:supp}. 
The decomposition will be done asymmetrically with respect to $P_1$ and $P_2$, but one can naturally change their roles.
In practice it is often more convenient to choose the projection with a smaller rank as the starting point.

We first decompose
\begin{equation*}
\hik=\hik_0\oplus \hik_1,
\end{equation*}
where 
\begin{align*}
\hik_0 &:=\{\varphi \in \hik\mid P_1\varphi=\varphi \}, & \hik_1 &:= \{\varphi \in \hik\mid P_1\varphi=0 \}.
\end{align*}
Next we look at the compression of the central element $C$ relative to this decomposition. Since $P_{\mathcal{K}}$ and $P_1$ commute, the projection onto $\hik_0$ is just $P_{\mathcal{K}}P_1$, and the projection onto $\hik_1$ is $P_{\mathcal{K}}P_1^\perp$. 
The central element $C$ commutes with these two operators, so we can compress it to each invariant subspace (see \eqref{eq:centralC-2}): 
\begin{align*}
C|_{\hik_0} &= P_1P_2P_1|_{\hik_0}\, ,  & C|_{\hik_1} &=P_1^\perp P_2^\perp P_1^\perp |_{\hik_1} \, .
\end{align*}
The core result in the ``theory of two projections'' (see the above references) says that these two selfadjoint operators are unitarily equivalent, with eigenvalues (if any) being in $(0,1)$ and the entire spectrum is some closed subset of $[0,1]$. 
This unitary equivalence implies that we can identify
\begin{equation*}
\hik = \hik_0\oplus \hik_1 \simeq \hik_0\otimes \complex^2 \, , 
\end{equation*}
in such a way that
\begin{equation*}
C|_{\hik} \simeq H\otimes \id_{\complex^2} \, , \quad \text{ with } H:=C|_{\hik_0} \, .
\end{equation*}
In order to make the connection to the two-dimensional case more explicit, we switch to the \emph{angle operator}
$\Theta$, defined as 
\begin{equation*}
\Theta := \arccos(2H-\id_{\hik_0}) \, . 
\end{equation*}
Since $h\mapsto \arccos(2h-1)$ maps the interval $[0,1]$ bijectively onto $[0,\pi]$, the spectrum $\spec{\Theta}$ of $\Theta$ is a closed subset of $[0,\pi]$, and neither of the endpoints are eigenvalues of $\Theta$.

For our purposes, the main point of this representation is that the two projections 
have the following simple forms:
\begin{equation}\label{eq:C2representationMT}
\begin{split}
P_1|_{\hik} &=\id\otimes \frac 12(\id_{\complex^2} +\sigma_z) \, ,  \\  
P_2 |_{\hik} &= \frac 12 \left(\id\otimes \id_{\complex^2}+\sin \Theta\otimes \sigma_x+\cos\Theta\otimes \sigma_z\right) \, .
\end{split}
\end{equation}
The second factor in $P_1|_{\hik}$ is a qubit projection.
A suitable representation of $P_2|_{\hik}$ depends on the spectrum of $\Theta$; this is treated in the following subsection.
 
\subsection{Direct integral decomposition of $P_\hik\algebra$}

In order to use the decomposition from above in explicit computations, it is convenient to use the spectral theorem to diagonalize the selfadjoint operator $\Theta=\arccos((2P_1P_2P_1-\id)|_{\hik_0})$.

In order to make the discussion more accessible, we consider first the simple case where $\hik_0$ is finite dimensional, with dimension $d$.
Then the spectrum $\spec{\Theta}$ is just the collection of (at most $d$) eigenvalues of $\Theta$, and we can decompose $\hik_0$ into a direct sum of the corresponding eigenspaces $\hi_{\theta}$, $\theta\in \spec{\Theta}$. Then $\hik\simeq \hik_0\otimes \complex^2$ decomposes as
$$
\hik\simeq \bigoplus_{\theta\in \spec{\Theta}} \hi_{\theta}\otimes \complex^2,
$$
and $\Theta$ acts as multiplication by $\theta \id_{\hi_\theta}\otimes \id_{\complex^2}$ in each component.

From \eqref{eq:C2representationMT} we immediately see that the projections decompose in the form
$$
P_i|_{\hik}=\bigoplus_{\theta\in \spec{\Theta}} \id_{\hi_\theta}\otimes M_{P_i}(\theta),
$$
where
\begin{equation}
\begin{split}
M_{P_1}(\theta) &=\frac{1}{2}(\id_{\complex^2}+\sigma_z),\\ M_{P_2}(\theta)&=\frac{1}{2}(\id_{\complex^2}+\sin\theta \sigma_x+\cos \theta\sigma_z).
\end{split}
\label{eq:C2representation}
\end{equation}
This just means that $\hik$ has a basis $\{e_i\}_{i=1}^{2d}$, in which both projections are diagonal up to two-by-two blocks, each block occurring the number of times given by the degeneracy of the corresponding eigenvalue $\theta\in \spec{\Theta}$.

Since none of the eigenvalues $\theta$ equals $0$ or $\pi$ (so that $\sin\theta$ is nonzero), it is clear that in each block, an arbitrary two-by-two matrix can be generated as a suitable linear combination of products of $M_{P_1}(\theta)$ and $M_{P_2}(\theta)$ and $\id_{\complex^2}$. Since the eigenprojections of $\Theta$ are in $\algebra$, this just means that each element of the algebra $P_\hik\algebra$ is of the form
\begin{equation}\label{directsumdecomp}
A|_{\hik}=\bigoplus_{\theta\in \spec{\Theta}} \id_{\hi_\theta}\otimes M_{A}(\theta),
\end{equation}
where each $M_A(\theta)$ is an arbitrary two-by-two matrix. In other words, an effect is an element of  $P_\hik\algebra$, if and only if its matrix in the above mentioned basis $\{ e_i\}$ is block diagonal with two-by-two blocks, such that the blocks associated to the same eigenvalue $\theta$ are equal. 

In the infinite-dimensional case, the spectrum of $\Theta$ can have a continuous part and there might not even be any eigenvalues at all. 
Then the appropriate analogue of the direct sum is the direct integral, and we can diagonalize $\Theta$ in the sense of the spectral theorem (see Prop. \ref{spectraltheorem} in Sect. \ref{sec:details}): For each element $\theta\in \spec{\Theta}$ we have a multiplicity space $\hi_\theta$, and there exists a Borel measure $\mu$ on $[0,\pi]$ with support $\spec{\Theta}$, such that $\hik$ is the direct integral space
\begin{equation}\label{dirintHrep}
\hik \simeq \int_{\spec{\Theta}}^{\oplus} \hi_\theta\otimes \complex^2\, d\mu(\theta) \, , 
\end{equation}
where $\Theta$ is the diagonal multiplication operator $\Theta(\theta) = \theta \id_{\hi_\theta}\otimes \id_{\complex^2}$.

In the analogue of the decomposition \eqref{directsumdecomp}, one has to require measurability for the function $\theta\mapsto M_A(\theta)$. We denote by $L^\infty([0,\pi],\mu,\mathcal{L}(\complex^2))$ the set of equivalence classes of $\mu$-essentially bounded $\mu$-measurable $\mathcal{L}(\complex^2)$-valued functions. The following Prop. characterises the algebra $P_\hik \algebra$ (see e.g. \cite{Borac95} and the references therein):

\begin{proposition}\label{structureprop}
There exists a unique von Neumann algebra isomorphism
$$P_\hik\algebra \ni A\mapsto M_A\in L^\infty([0,\pi],\mu,\mathcal{L}(\complex^2)),$$
such that \eqref{eq:C2representation} holds. Moreover, we have
\begin{equation}\label{decomposition}
A = \int_{[0,\pi]} \id_{\hi_\theta} \otimes M_A(\theta)\, d\mu(\theta)
\end{equation}
for all $A\in P_\hik\algebra$.
\end{proposition}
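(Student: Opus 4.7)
The plan is to realize the direct integral representation of $\hik$ via the spectral theorem, use it to exhibit $P_1|_\hik$ and $P_2|_\hik$ as decomposable operators with explicit fibers, and then prove that $P_\hik\algebra$ coincides with the full algebra $\mathcal D$ of all such decomposable operators via von Neumann's double commutant theorem.

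First, I would invoke Prop.~\ref{spectraltheorem} for the bounded selfadjoint operator $\Theta$ on $\hik_0$. This produces a Borel measure $\mu$ on $[0,\pi]$ with support $\spec{\Theta}$, measurable multiplicity spaces $\hi_\theta$, and a unitary identification $\hik_0\simeq\int^\oplus\hi_\theta\,d\mu(\theta)$ under which $\Theta$ is diagonal multiplication. Tensoring with $\complex^2$ yields \eqref{dirintHrep}, and by \eqref{eq:C2representationMT} both $P_1|_\hik$ and $P_2|_\hik$ become decomposable with fibers given by \eqref{eq:C2representation}. By standard direct integral theory, the set $\mathcal D$ of all operators $\int^\oplus \id_{\hi_\theta}\otimes X(\theta)\,d\mu(\theta)$ with $X\in L^\infty([0,\pi],\mu,\mathcal{L}(\complex^2))$ is a von Neumann subalgebra of $\mathcal{L}(\hik)$, and the fiber assignment $\int^\oplus \id_{\hi_\theta}\otimes X(\theta)\,d\mu(\theta)\mapsto X$ is a weak-$*$ continuous $*$-isomorphism of $\mathcal D$ onto $L^\infty([0,\pi],\mu,\mathcal{L}(\complex^2))$. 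The inclusion $P_\hik\algebra\subseteq\mathcal D$ is then immediate, and restricting this isomorphism to $P_\hik\algebra$ furnishes the candidate map $A\mapsto M_A$, which satisfies \eqref{decomposition} by construction.

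The central step is the reverse inclusion $\mathcal D\subseteq P_\hik\algebra$, which I would establish via the double commutant theorem: since $P_\hik\algebra$ is a von Neumann algebra, it suffices to show $(P_\hik\algebra)'\subseteq\mathcal D'$. Let $T\in(P_\hik\algebra)'$. Since $C|_\hik=H\otimes\id_{\complex^2}\in P_\hik\algebra$ and the map $\theta\mapsto\tfrac 12(1+\cos\theta)$ is a homeomorphism $[0,\pi]\to[0,1]$, functional calculus shows that the bounded Borel functions of $C|_\hik$ exhaust the diagonal multiplication subalgebra associated to the direct integral. Because $T$ commutes with all of these, the standard characterization of decomposable operators forces $T=\int^\oplus T(\theta)\,d\mu(\theta)$. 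The conditions $[T,P_i|_\hik]=0$ then descend to $[T(\theta),\id_{\hi_\theta}\otimes M_{P_i}(\theta)]=0$ for $\mu$-a.e.\ $\theta$. Since neither $0$ nor $\pi$ is an eigenvalue of $\Theta$, the set $\{0,\pi\}$ is $\mu$-null, so $\sin\theta\neq 0$ $\mu$-a.e., and at any such $\theta$ the non-commuting rank-one projections $M_{P_1}(\theta),M_{P_2}(\theta)$ generate the full matrix algebra $\mathcal{L}(\complex^2)$. Consequently $T(\theta)\in\mathcal{L}(\hi_\theta)\otimes\complex\id_{\complex^2}$ $\mu$-a.e., which places $T$ in $\mathcal D'$ and closes the loop: $P_\hik\algebra=(P_\hik\algebra)''=\mathcal D''=\mathcal D$.

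Uniqueness of the resulting isomorphism is automatic, since any normal $*$-homomorphism on a von Neumann algebra is weak-$*$ continuous, and its values on the generators $P_1|_\hik$ and $P_2|_\hik$ are prescribed by \eqref{eq:C2representation}, so two such homomorphisms must agree on the weak-$*$ closure of the $*$-algebra they generate, which is all of $P_\hik\algebra$. The main obstacle I anticipate is the measure-theoretic handling of direct integrals in the general separable setting, specifically the characterisation of decomposable operators as precisely those commuting with the diagonal algebra and the measurable selection of fibers when $\spec{\Theta}$ has a continuous part and $\dim\hi_\theta$ varies with $\theta$; these are classical but technically heavy results from direct integral theory on which the above argument relies, and which the authors presumably relegate to Sect.~\ref{sec:details}.
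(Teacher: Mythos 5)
The paper gives no proof of Proposition \ref{structureprop}: it is quoted from the structure theory of the two-projection algebra, with a pointer to \cite{Borac95} and the references therein, so there is no internal argument to compare against. Your proposal supplies an actual proof, and it is correct. The inclusion $P_\hik\algebra\subseteq\mathcal D$ is indeed immediate once one knows $\mathcal D$ is a von Neumann algebra containing $\id_\hik$, $P_1|_\hik$ and $P_2|_\hik$; for the reverse inclusion your bicommutant argument goes through because (i) by the last clause of Prop.~\ref{spectraltheorem} together with the homeomorphism $\theta\mapsto\tfrac12(1+\cos\theta)$, the von Neumann algebra generated by $C|_\hik=H\otimes\id_{\complex^2}$ is precisely the diagonal algebra of the direct integral, so any $T\in(P_\hik\algebra)'$ is decomposable by the classical commutant-of-the-diagonal-algebra theorem; (ii) commutation with $P_i|_\hik$ passes to the fibers $\mu$-a.e.; and (iii) $\{0,\pi\}$ is $\mu$-null because neither endpoint is an eigenvalue of $\Theta$ (atoms of the scalar spectral measure correspond to eigenvalues), so $\mu$-a.e.\ the fibers $M_{P_1}(\theta),M_{P_2}(\theta)$ are non-commuting rank-one projections generating all of $\mathcal{L}(\complex^2)$, forcing $T(\theta)\in\mathcal{L}(\hi_\theta)\otimes\complex\id_{\complex^2}$ and hence $T\in\mathcal D'$. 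The uniqueness argument, via automatic normality of von Neumann algebra isomorphisms and the weak-* density in $P_\hik\algebra$ of the $*$-algebra generated by $P_1|_\hik$ and $P_2|_\hik$, is also sound. What your route buys is self-containedness within the direct-integral machinery the paper already assembles in Sect.~\ref{sec:details}: the only imported facts (commutant of the diagonal algebra equals the decomposable operators, and $\mathcal D\cong L^\infty([0,\pi],\mu,\mathcal{L}(\complex^2))$ as von Neumann algebras) are standard and available in \cite{FTOAII86}. What the paper's citation buys is brevity and access to the finer results of the same literature, such as the $C^*$-algebra characterisation with its boundary conditions stated right after Prop.~\ref{structureprop}, which your argument does not (and need not) address.
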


In particular, $P_{\mathcal{K}}\algebra$ is the von Neumann subalgebra of those decomposable operators \footnote{See Sect. \ref{dirintsec} for definition} $A$ for which $A(\theta)$ acts trivially on each multiplicity space $\hi_\theta$.

Note that, as a function, $M_A$ is determined up to $\mu$-null sets only. However, in many instances of the coexistence problem only the $C^*$-algebra generated by the projections is needed. In the finite-dimensional case, this is just the same as the von Neumann algebra, but it is in general smaller. The following result characterizes this algebra (see e.g. \cite{Pedersen68}).

\begin{proposition} $A\in P_{\mathcal{K}}\algebra$ is the compression of an element of the $C^*$-algebra generated by $P_1$, $P_2$, and $\id$, if and only if
\begin{itemize}
\item[(i)] $M_A$ is continuous on $\spec{\Theta}$, and
\item[(ii)] $M_A(\theta)$ is diagonal for $\theta\in \{0,\pi\}\cap\spec{\Theta}$.
\end{itemize}
\end{proposition}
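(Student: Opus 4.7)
The plan is to prove both directions by explicit manipulation of the matrix functions in \eqref{eq:C2representation}. For necessity, observe that by Prop. \ref{structureprop} the map $A\mapsto M_A$ restricted to the compression of $C^*(P_1,P_2,\id)$ is a $*$-homomorphism, hence norm-contractive. As $M_{P_1}(\theta)$ is constant and $M_{P_2}(\theta)$ depends continuously on $\theta$, every $*$-polynomial in $P_1$, $P_2$, $\id$ yields a continuous function $\spec{\Theta}\to\mathcal{L}(\complex^2)$; moreover, both $M_{P_1}(\theta)$ and $M_{P_2}(\theta)$ are diagonal at $\theta\in\{0,\pi\}$, so every such polynomial takes diagonal values there. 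Since $C^*(P_1,P_2,\id)$ is the norm closure of these polynomials and norm convergence yields uniform convergence of the corresponding $M$-functions, both (i) and (ii) persist in the limit.

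For sufficiency, let $\mathcal{B}\subset C(\spec{\Theta},\mathcal{L}(\complex^2))$ denote the $C^*$-subalgebra obtained by compressing $C^*(P_1,P_2,\id)$ to $\hik$ and applying $A\mapsto M_A$; I need to show that $\mathcal{B}$ exhausts the continuous functions satisfying (ii). First, I would establish that $\mathcal{B}$ contains the scalar-valued functions $f(\theta)\,\id_{\complex^2}$ for every $f\in C(\spec{\Theta})$. By \eqref{eq:centralC-2} and \eqref{eq:C2representation}, the compression $C|_{\hik}$ has matrix function $\cos^2(\theta/2)\,\id_{\complex^2}$, and since $\theta\mapsto\cos^2(\theta/2)$ is a homeomorphism of $[0,\pi]$ onto $[0,1]$, the continuous functional calculus applied to $C|_{\hik}$ inside the $C^*$-algebra produces every $f(\theta)\,\id_{\complex^2}$ with $f\in C(\spec{\Theta})$. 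Multiplying by $M_{P_1}$ and by $\id - M_{P_1}$, which both lie in $\mathcal{B}$, then yields every continuous diagonal matrix-valued function on $\spec{\Theta}$.

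For the off-diagonal part, a direct computation from \eqref{eq:C2representation} shows that the self-adjoint elements $P_1P_2P_1^\perp+P_1^\perp P_2P_1$ and $i[P_1,P_2]$ of the $C^*$-algebra have matrix functions $\tfrac{1}{2}\sin\theta\,\sigma_x$ and $\tfrac{1}{2}\sin\theta\,\sigma_y$, respectively. Multiplying these by the scalar functions already obtained shows that $\mathcal{B}$ contains every $f(\theta)\sin\theta\,\sigma_x$ and $f(\theta)\sin\theta\,\sigma_y$ with $f\in C(\spec{\Theta})$. The decisive step, which I expect to be the main technical point, is to identify the uniform closure of $\{f\sin\theta:f\in C(\spec{\Theta})\}$ inside $C(\spec{\Theta})$ with the closed ideal of continuous functions vanishing on $\{0,\pi\}\cap\spec{\Theta}$; this is the standard Gelfand-duality fact that a closed ideal in $C(X)$ consists of exactly the functions vanishing on the hull of its generators, applied here to the principal ideal generated by $\sin\theta$ whose zero set in $[0,\pi]$ is precisely $\{0,\pi\}$. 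Combining the scalar, diagonal and off-diagonal contributions exhausts all continuous $\mathcal{L}(\complex^2)$-valued functions on $\spec{\Theta}$ that are diagonal at $\{0,\pi\}\cap\spec{\Theta}$, completing the argument.
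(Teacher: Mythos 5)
The paper does not actually prove this proposition: it is quoted from the two-projections literature with a pointer to Pedersen's work, so there is no internal argument to compare yours against. Your blind proof is essentially correct and is, in substance, the standard route by which this characterization is established. The necessity direction is fine: polynomials in $P_1,P_2,\id$ have matrix functions built from the constant $M_{P_1}$ and the continuous $M_{P_2}$ of \eqref{eq:C2representation}, hence are continuous and diagonal at $\theta\in\{0,\pi\}$, and since the $L^\infty(\mu)$-norm of a continuous matrix function equals its supremum over $\spec{\Theta}=\mathrm{supp}\,\mu$, norm limits preserve (i) and (ii). The sufficiency direction also works: the compression map $A'\mapsto P_{\mathcal{K}}A'$ is a $*$-homomorphism (as $P_{\mathcal{K}}$ is central), so the image $\mathcal{B}$ of the $C^*$-algebra is a closed subalgebra of $C(\spec{\Theta},\mathcal{L}(\complex^2))$ --- a point you use implicitly when passing to the uniform closure of $\{f\sin\theta\}$ and which you should state, since without closedness of $\mathcal{B}$ the ideal argument only lands you in its closure. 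Your identification of the closed ideal generated by $\sin\theta$ in $C(\spec{\Theta})$ with the functions vanishing on $\{0,\pi\}\cap\spec{\Theta}$ is the right key fact, and your computations of the matrix functions of $C$, of $P_1P_2P_1^\perp+P_1^\perp P_2P_1$, and of $i[P_1,P_2]$ are correct; together with $M_{P_1}$, $\id-M_{P_1}$ and the scalars these do exhaust all continuous functions diagonal at the endpoints (note that $\sigma_x,\sigma_y$ span the off-diagonal matrices over $\complex$, so non-selfadjoint $A$ are covered too). One small imprecision: the functional calculus should be applied to $C$ itself inside the $C^*$-algebra (extending $f\circ(\cos^2(\cdot/2))^{-1}$ continuously from $\spec{H}$ to $\spec{C}$, e.g.\ by Tietze) and only then compressed, since $C|_{\hik}=P_{\mathcal{K}}C$ need not itself lie in the $C^*$-algebra; with that rewording, and \eqref{functions} in its ``continuous'' form, your argument goes through and gives a self-contained proof of the statement the paper only cites.
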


Being a ``compression" here means that $A=P_{\mathcal{K}}A'$ with $A'$ in the $C^*$-algebra in question. (Note that $P_{\mathcal{K}}$ is not, in general, an element of that algebra.) In the full characterization of the $C^*$-algebra, boundary conditions in (ii) must be related to the structure of the commutation domain; for details, see e.g. \cite{Pedersen68}. Note also that condition (ii) is void in the finite-dimensional case, because $0$ and $\pi$ are not eigenvalues of $\Theta$.

In particular, if $A\in P_\hik\algebra$ is selfadjoint, and $f:\real\to \complex$ is bounded and measurable, then $M_A(\theta)$ is selfadjoint for $\mu$-almost all $\theta$, and
\begin{equation}\label{functions}
M_{f(A)}(\theta)=f(M_A(\theta)) \text{ for $\mu$-almost all $\theta\in \spec{\Theta}$.}
\end{equation}
This observation, together with \eqref{eq:C2representation}, allows one to construct the function $M_A$ for any element $A\in \algebra$. If $A$ is a polynomial of $P_1$ and $P_2$, this can be done explicitly by just multiplying matrices and taking linear combinations.

Since we are dealing mostly with positive operators, it is also important to note the positivity condition (see also Sect. \ref{sec:details}): for $A\in P_\hik\algebra$,
\begin{equation}\label{poscond}
A\geq 0 \text{ if and only if } M_{A}(\theta)\geq 0 \text{ for $\mu$-almost all $\theta\in \spec{\Theta}$.}
\end{equation}
In the case where $A$ is in the $C^*$-algebra and $f$ is continuous, we can omit ''$\mu$-almost'' in \eqref{functions} and \eqref{poscond}, because $\spec{\Theta}$ is the support of $\mu$.

\begin{remark}

Let $\hi=\complex^2$.
In the Bloch vector parametrization described in Sect. \ref{sec:review}, two projections take the form
\begin{align*}
P_1&=\frac 12(\id+{\bf n}_1\cdot \sigma), & P_2&=\frac 12(\id+{\bf n}_2\cdot \sigma),
\end{align*}
where ${\bf n}_1$ and ${\bf n}_2$ are unit vectors in $\real^3$.
Leaving out the commutative case, when ${\bf n}_1=\pm {\bf n}_2$ and  $P_1$ and $P_2$ commute, let $\theta\in (0,\pi)$ be the angle between the vectors ${\bf n}_1$ and ${\bf n}_2$. 
Now $\hi=\hik$, and the angle operator $\Theta$ acts on a one-dimensional subspace and is thus determined by its only eigenvalue, which is $\theta$. Thus, changing to the representation \eqref{dirintHrep} is simply done via a suitable coordinate transformation (i.e. an $SU(2)$ unitary transformation in $\complex^2$), which makes the projections
\begin{equation}
\label{eq:scaledMQ}
P_1=\frac{1}{2}(\id+\sigma_z),\qquad P_2=\frac{1}{2}(\id+\sin\theta \sigma_x+\cos \theta\sigma_z),
\end{equation}
as in \eqref{eq:C2representation}.
Summarizing this case, we have $\spec{\Theta} = \{\theta\}$, $\hi_{\theta}= \complex$, and $\mu=\delta_{\theta}$ (the point measure at $\theta$).
\end{remark}

\section{Main result}\label{sec:main}

We now proceed to the main result of the paper, namely the characterization of coexistence of a pair of effects from $\algebra$.

Clearly, if an element $A\in \algebra$ is an effect, we have $0\leq M_A(\theta)\leq \id_{\complex^2}$ for $\mu$-almost all $\theta\in [0,\pi]$ in the decomposition \eqref{decomposition}. Conversely, a function $M\in L^\infty([0,\pi],\mu, \complex^2)$ with $0\leq M(\theta)\leq \id_{\complex^2}$ for $\mu$-almost all $\theta\in [0,\pi]$ defines a unique effect in $P_{\mathcal{K}}\algebra$.

\subsection{Finite-dimensional case}
We begin with the finite-dimensional case, which is a consequence of Prop. \ref{discretedirectsum} and the qubit characterisation Theorem \ref{qubitcoexthm}.

\begin{proposition}\label{finitedimcase}
Suppose $\hi$ is finite-dimensional, and let $A,B\in \algebra$ be two effects. Then $A\coex B$ if and only if 
\begin{equation}
\coexf(M_{A}(\theta),M_{B}(\theta))\geq 0\text{ for all } \theta\in \spec{\Theta} \, .
\end{equation}
In that case, $A$ and $B$ have a joint observable whose effects are contained in $\algebra$.
\end{proposition}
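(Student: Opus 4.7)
The plan is to strip off successive layers of direct-sum structure in $\algebra$ by iterating Prop. \ref{discretedirectsum}, until the problem reduces to the qubit characterisation of Theorem \ref{qubitcoexthm}. First I would use the central projection $P_{\mathcal{K}}\in\algebra$ to split $\hi = \hik\oplus\hik^\perp$. Both $A$ and $B$ are decomposable along this split, so Prop. \ref{discretedirectsum} equates $A\coex B$ with the pair of conditions $A|_{\hik}\coex B|_{\hik}$ and $A|_{\hik^\perp}\coex B|_{\hik^\perp}$. On the commutation domain $\hik^\perp$ the whole restricted algebra $\algebra|_{\hik^\perp}$ is commutative, so the second coexistence is automatic; a canonical commuting joint observable is $(AB\,P_{\hik^\perp},\,AB^\perp\,P_{\hik^\perp},\,A^\perp B\,P_{\hik^\perp},\,A^\perp B^\perp\,P_{\hik^\perp})$, whose effects all lie in $\algebra$.

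For the $\hik$-part I would invoke the finite-dimensional block decomposition $\hik\simeq\bigoplus_{\theta\in\spec{\Theta}}\hi_\theta\otimes\complex^2$ together with the block form \eqref{directsumdecomp}, which writes $A|_{\hik}=\bigoplus_\theta \id_{\hi_\theta}\otimes M_A(\theta)$ and similarly for $B$. A second application of Prop. \ref{discretedirectsum} reduces $A|_{\hik}\coex B|_{\hik}$ to coexistence on each summand $\hi_\theta\otimes\complex^2$. Picking an orthonormal basis of $\hi_\theta$ identifies this summand with a direct sum of $\dim\hi_\theta$ copies of $\complex^2$ on each of which both effects act as $M_A(\theta)$ and $M_B(\theta)$ respectively, so a third application of Prop. \ref{discretedirectsum} reduces the problem to $M_A(\theta)\coex M_B(\theta)$ in $\eh[\complex^2]$. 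Theorem \ref{qubitcoexthm} then turns this last condition into $\coexf(M_A(\theta),M_B(\theta))\geq 0$, and chaining the equivalences through every $\theta\in\spec{\Theta}$ yields the stated characterisation.

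For the ``in that case'' clause, I would patch together the joint observables produced at each level: on $\hik^\perp$ use the commutative joint observable above, and on each $\hi_\theta\otimes\complex^2$ take $\id_{\hi_\theta}\otimes G_i(\theta)$, where $(G_i(\theta))_{i=1}^4$ is a qubit joint observable supplied by Theorem \ref{qubitcoexthm}. Prop. \ref{structureprop} then certifies that the assembled operators $G_i|_{\hik}:=\bigoplus_\theta \id_{\hi_\theta}\otimes G_i(\theta)$ lie in $P_{\hik}\algebra$, so that combined with the $\hik^\perp$-part they belong to $\algebra$. I do not anticipate a serious conceptual obstacle: each step is a direct appeal to an earlier result, and in finite dimension no measurability question complicates the final gluing. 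The only point requiring explicit attention is verifying that the block-diagonal joint effects assembled from the qubit level truly sit inside $\algebra$ and not merely in the enveloping algebra of decomposable operators, which is precisely what Prop. \ref{structureprop} guarantees.
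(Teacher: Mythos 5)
Your proof is correct and follows essentially the same route as the paper: split off the commutation domain (where the restrictions commute), use Prop.~\ref{discretedirectsum} with the block decomposition \eqref{directsumdecomp} to reduce everything to the qubit effects $M_A(\theta)$, $M_B(\theta)$, and invoke Theorem~\ref{qubitcoexthm}, then assemble block-diagonal joint effects lying in $\algebra$. The only cosmetic difference is that the paper passes from $\id_{\hi_\theta}\otimes M_A(\theta)$ to $M_A(\theta)$ by compressing with an isometry $V_\theta\varphi=\eta_\theta\otimes\varphi$ rather than by your further direct-sum splitting of $\hi_\theta\otimes\complex^2$ into copies of $\complex^2$.
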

\begin{proof}
Since $P_{\mathcal{K}^\perp}A$ and $P_{\mathcal{K}^\perp}B$ commute, they are always coexistent in the subspace $\hik^\perp$. Hence, according to Prop. \ref{discretedirectsum} and the decomposition \ref{directsumdecomp}, we have $A\coex B$ if and only if $\id_{\hi_\theta}\otimes M_A(\theta)\coex \id_{\hi_\theta}\otimes M_B(\theta)$ for each $\theta$ in the finite set $\spec{\Theta}$. But this is equivalent to the coexistence of the qubit effects $M_A(\theta)$ and $M_B(\theta)$ for each $\theta$. To see this, fix a unit vector $\eta_\theta\in \hi_\theta$ for each $\theta$, and define $V_\theta:\complex^2\to \hi_\theta\otimes \complex^2$ via $V_\theta\varphi=\eta_\theta\otimes \varphi$. Then e.g. $M_A(\theta)=V_\theta^*(\id_{\hi_\theta}\otimes M_A(\theta))V_\theta$, from which the claim clearly follows.

In the case where $A\coex B$, we can pick $G_\theta\in \mathcal C(M_A(\theta), M_B(\theta))$ for each $\theta$; then $G:=\id_{\hi_\theta}\otimes G_\theta\in \mathcal C(A,B)$, and therefore defines a joint observable for $A$ and $B$ with effects in $\algebra$. This completes the proof. 
\end{proof}

We demonstrate the use of this result in the following example.

\begin{example}
Let $\{\psi_1,\psi_2,\psi_3\}$ be an orthonormal basis for $\hi=\complex^3$ and $\varphi=\frac{1}{\sqrt{3}}(\psi_1+\psi_2+\psi_3)$.
We denote $P_1=\kb{\psi_1}{\psi_1}$ and $P_2=\kb{\varphi}{\varphi}$.
We then have $\ker \bigl( [P_1,P_2] \bigr)=\complex (\psi_2-\psi_3)$ and $\hik$ is the two dimensional subspace spanned by the vectors $\psi_1$ and $\varphi$.
Now $\hik_0=\complex \psi_1$ is one-dimensional, and $$H=P_1P_2P_1|_{\hik_0}=|\langle \psi_1|\varphi\rangle|^2\id_{\mathbb{C}} = \frac{1}{3}\id_\mathbb{C},$$ so that the spectrum of operator $\Theta$ is $\spec{\Theta}=\{\theta\}$ with $\theta=\arccos(-1/3)$.

Let us consider two 1-parameter families of effects,
\begin{align*}
A_s  = s (P_1+P_2) \, , \qquad B_t = t (P_1+P_2^\perp) \, , 
\end{align*}
where the requirements $0\leq A_s \leq \id$ and $0\leq B_t \leq \id$ mean that 
\begin{align*}
0\leq s \leq s_{max} \equiv (3-\sqrt{3})/2 \approx 0.63  \, , \qquad 0\leq t \leq t_{max} \equiv 3-\sqrt{6} \approx 0.55 \, .
\end{align*}
Since $[A_s,B_t]=-2st[P_1,P_2]$, two effects $A_s$ and $B_t$ do not commute unless $s=0$ or $t=0$.
But by using the sufficient condition \eqref{eq:ginf} we find a large region of coexistence drawn in Fig.~\ref{fig:dim3}.

\begin{figure}
\begin{center}
\includegraphics{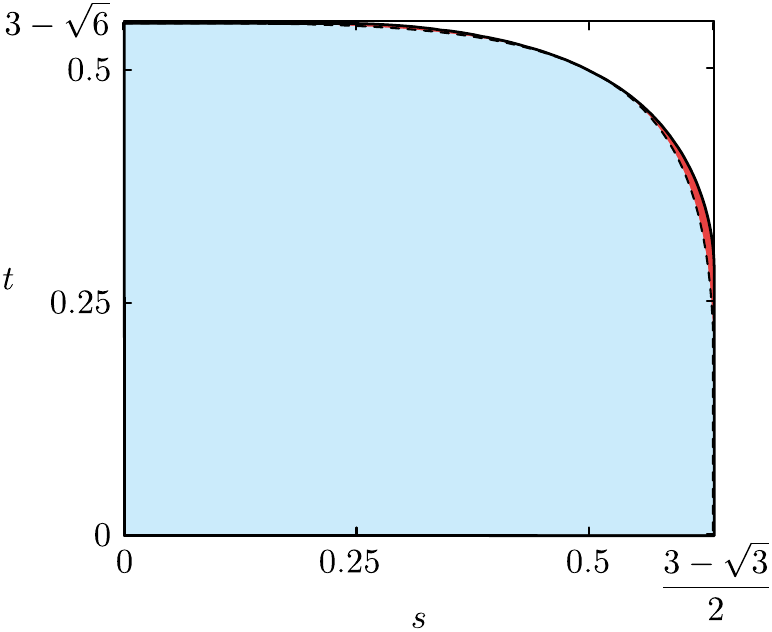}
\end{center}
\caption{Effects $A_s$ and $B_t$ are coexistence when $(s,t)$ lies in the colored region. 
Considering the sufficient condition \eqref{eq:ginf} one gets the region (light blue) of $s$ and $t$ almost covering the whole coexistence region bounded by a fourth order curve (dark red).}
\label{fig:dim3}
\end{figure}
Since \eqref{eq:ginf} provides only a sufficient condition for coexistence, the region of coexistent pairs $A_s$ and $B_t$ can be larger.
The results of this paper provide a complete characterization of coexistence for $A_s$ and $B_t$.  
Using \eqref{eq:C2representation} we get
\begin{align}
M_{A_s}(\theta) &= \half\left[2s\id+s\sin\theta\sigma_x+s(1+\cos\theta)\sigma_z\right],\\
M_{B_t}(\theta) &= \half\left[2t\id-t\sin\theta\sigma_x+t(1-\cos\theta)\sigma_z\right].
\end{align}
Inserting into $\coexf(M_{A_s}(\theta),M_{B_t}(\theta))\geq 0$ we find that the necessary and sufficient condition for the coexistence of $A_s$ and $B_t$ is
\begin{equation}
\label{eq:dim3whole}
\sqrt{2(2s^2-6s+3)(t^2-6t+3)}+16st-12s-15t+9 \geq0 \, .
\end{equation}
The bracketed terms under the square root are positive exactly for $0\leq s\leq s_{max}$ and $0\leq t\leq t_{max}$. 
Considering whole \eqref{eq:dim3whole} one gets a fourth-order curve that is restrictive also only within this region; see Fig.~\ref{fig:dim3}.
We see that \eqref{eq:ginf} covers almost all coexistent pairs, but there is a small region of $(s,t)$ such that the effects $A_s$ and $B_t$ are coexistent even if they do not fulfill \eqref{eq:ginf}.
\end{example}

\subsection{General case} In the infinite-dimensional case, where $\spec{\Theta}$ can have continuous part, the situation is complicated by various measure-theoretic subtleties. Most importantly, the selection of the effects $G_\theta$ in the above proof must be done in a measurable way, which requires nontrivial use of the topological properties of the unit ball of the set of bounded operators. So as not to confuse a reader unfamiliar with infinite-dimensional spaces, we postpone the general proof to Sect. \ref{sec:details}, and present here only the result:

\begin{proposition}\label{prop:compatibilityvNm}
Let $A,B\in \algebra$ be two effects. Then $A\coex B$ if and only if 
\begin{equation}
\coexf(M_{A}(\theta),M_{B}(\theta))\geq 0\text{ for $\mu$-almost all } \theta\in [0,\pi] \, .
\end{equation}
In that case, $A$ and $B$ have a joint observable whose effects are contained in $\algebra$.
\end{proposition}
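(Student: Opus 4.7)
The plan is to run the argument of Prop. \ref{finitedimcase} with the finite direct sum replaced by the direct integral \eqref{dirintHrep} from Prop. \ref{structureprop}. Structurally the argument is identical to the finite-dimensional one; the novelty lies entirely in the measure-theoretic bookkeeping needed to handle a possibly continuous spectrum of $\Theta$.

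As in the finite-dimensional case, we first dispose of the commutative part: $P_{\mathcal{K}^\perp}A$ and $P_{\mathcal{K}^\perp}B$ commute on $\hik^\perp$ and hence are coexistent there, so $A\coex B$ iff $A|_\hik \coex B|_\hik$. By Prop. \ref{structureprop}, the restrictions are decomposable operators with fibers $\id_{\hi_\theta}\otimes M_A(\theta)$ and $\id_{\hi_\theta}\otimes M_B(\theta)$. Invoking Prop. \ref{dirintcoex}---the direct-integral analog of Prop. \ref{discretedirectsum} proved in Sect. \ref{sec:details}---this coexistence is equivalent to the fiberwise statement that $\id_{\hi_\theta}\otimes M_A(\theta) \coex \id_{\hi_\theta}\otimes M_B(\theta)$ for $\mu$-almost every $\theta$. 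Exactly as in the finite-dimensional proof, compression by a unit vector $\eta_\theta\in \hi_\theta$ identifies this fiberwise relation with qubit coexistence $M_A(\theta)\coex M_B(\theta)$, which by Theorem \ref{qubitcoexthm} is precisely the inequality $\coexf(M_A(\theta),M_B(\theta)) \geq 0$. To obtain a joint observable in $\algebra$, one proceeds in the reverse direction: pick measurably $G_\theta \in \mathcal{C}(M_A(\theta),M_B(\theta))$ and assemble $G := \int^{\oplus} \id_{\hi_\theta}\otimes G_\theta\, d\mu(\theta)$, which by the positivity criterion \eqref{poscond} lies in $\mathcal{C}(A|_\hik,B|_\hik) \cap P_\hik\algebra$ and then extends trivially across $\hik^\perp$ to an element of $\mathcal{C}(A,B)\cap \algebra$.

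The main obstacle, and indeed the content of Prop. \ref{dirintcoex}, is the measure theory underlying both directions. For the necessity direction, a generic $G\in\mathcal{C}(A,B)$ need not be decomposable, so one must first produce a decomposable element of $\mathcal{C}(A,B)$; the natural device is a normal conditional expectation from $\mathcal{L}(\hik)$ onto the type-$I_2$ algebra $P_\hik\algebra$, whose positivity and unitality preserve the operator inequalities defining $\mathcal{C}(A,B)$. Once inside $P_\hik\algebra$, the positivity criterion \eqref{poscond} transfers the inequalities fiberwise. For the sufficiency direction, one must choose the fiberwise $G_\theta$ in a measurable fashion so that the direct integral $G$ is well defined. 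The fibers $\theta\mapsto \mathcal{C}(M_A(\theta),M_B(\theta))$ form a multifunction with nonempty, convex, weak-$*$ compact values inside the unit ball of $\mathcal{L}(\complex^2)$, and this multifunction is measurable because it is cut out by operator inequalities with measurable coefficients $M_A,M_B$. A classical measurable selection theorem of Kuratowski--Ryll-Nardzewski type then produces the required section $\theta\mapsto G_\theta$. Verifying these hypotheses rigorously, and checking that the assembled $G$ satisfies the global operator inequalities defining $\mathcal{C}(A,B)$, is the technical heart of the argument and is precisely what is deferred to Sect. \ref{sec:details}.
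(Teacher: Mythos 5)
Your overall architecture matches the paper's: reduce to $\hik$, use the direct integral decomposition of Prop.~\ref{structureprop}, identify fibrewise coexistence with qubit coexistence via Theorem~\ref{qubitcoexthm}, and handle the sufficiency direction by a measurable selection of $G_\theta\in\mathcal{C}(M_A(\theta),M_B(\theta))$ (the paper does exactly this in Lemma~\ref{mselectionlemma}, via the Kadison--Ringrose selection principle, feeding into Props.~\ref{dirintcoex} and \ref{dirintcoextensor}). The gap is in your necessity direction. You propose to produce a decomposable element of $\mathcal{C}(A,B)$ by applying a \emph{normal} conditional expectation from $\mathcal{L}(\hik)$ onto $P_\hik\algebra$. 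Such a map does not exist precisely in the cases this proposition is meant to add beyond Prop.~\ref{finitedimcase}: a normal norm-one projection of $\mathcal{L}(\hi)$ onto a von Neumann subalgebra exists only when the subalgebra is atomic, whereas $P_\hik\algebra\simeq L^\infty([0,\pi],\mu,\mathcal{L}(\complex^2))$ has diffuse center whenever $\mu$ has a continuous part (e.g.\ the position--momentum pair of Example~\ref{QP}, where the spectrum of $\Theta$ is purely absolutely continuous). So the ``natural device'' you invoke is unavailable exactly where the finite-dimensional argument no longer applies.

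The paper circumvents this with Lemma~\ref{comcoex}: the algebra of decomposable operators has an \emph{Abelian} commutant (the diagonal algebra $L^\infty(I,\mu)$), so the Markov--Kakutani fixed point theorem applied to the weak-$*$ compact convex set $\mathcal{C}(A,B)$ and the unitaries of that commutant yields an invariant element, which by the bicommutant theorem is decomposable; a fibrewise compression by isometries built from a fixed measurable vector field (Prop.~\ref{dirintcoextensor}) then lands at the qubit level. Your argument can be repaired without that lemma by simply dropping normality: $P_\hik\algebra$ is a type I, hence injective, von Neumann algebra, so a (generally non-normal) conditional expectation $E:\mathcal{L}(\hik)\to P_\hik\algebra$ exists, and your argument only uses positivity, $E(\id)=\id$, and $E|_{P_\hik\algebra}=\mathrm{id}$ to conclude $E(G)\in\mathcal{C}(A,B)\cap P_\hik\algebra$, after which \eqref{poscond} transfers the defining inequalities to $\mu$-almost every fibre. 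As written, however, the existence claim you rely on is false in the non-atomic case, so this step needs to be replaced by one of these two devices.
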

\begin{proof}
As in the finite-dimensional case, the commutation domain does not play a role. The claim therefore follows from \eqref{decomposition}, Theorem \ref{qubitcoexthm} and Prop. \ref{dirintcoextensor} presented in Sect. \ref{sec:details}.
\end{proof}

\begin{corollary}\label{prop:compatibility}
If $A,B\in \algebra$ belong to the $C^*$-algebra generated by the two projections $P_1$ and $P_2$, then $A\coex B$ if and only if
\begin{equation}
\coexf(M_{A}(\theta),M_{B}(\theta))\geq 0\text{ for all } \theta\in \spec{\Theta} \, .
\end{equation}
\end{corollary}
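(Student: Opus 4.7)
The plan is to deduce this corollary directly from Proposition \ref{prop:compatibilityvNm} by upgrading the ``$\mu$-almost all'' condition to ``all $\theta\in\spec{\Theta}$'' via a continuity argument. Since the commutation domain plays no role, I will silently replace $A,B$ by their compressions $P_{\mathcal{K}}A$ and $P_{\mathcal{K}}B$ throughout, which are well-defined elements of $P_{\mathcal{K}}\algebra$ and, by the Proposition characterising the $C^*$-algebra, have representatives $M_A$ and $M_B$ that are \emph{continuous} $\mathcal L(\complex^2)$-valued functions on $\spec{\Theta}$.

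The key observation I need is that the function $\coexf$ is continuous on the set $\eh\times \eh$ when $\hi=\complex^2$. This requires only a moment's inspection: for any effect $A$ parametrized as in \eqref{definition of A}, the four quantities $\langle A|A\rangle$, $\langle A^\perp|A^\perp\rangle$, $\langle B|B\rangle$, $\langle B^\perp|B^\perp\rangle$ are all non-negative by the effect conditions $\no{\va}\leq \alpha\leq 2-\no{\va}$ and its analogue for $B$, so the square root term in the definition of $\coexf$ is continuous on $\eh\times \eh$; the remaining terms are polynomials in the parameters. Composing with the continuous maps $\theta\mapsto M_A(\theta)$ and $\theta\mapsto M_B(\theta)$ yields a continuous function $\Phi(\theta):=\coexf(M_A(\theta),M_B(\theta))$ on $\spec{\Theta}$.

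The two directions are then immediate. If $\Phi(\theta)\geq 0$ for all $\theta\in\spec{\Theta}$, then since $\mu$ is supported on $\spec{\Theta}$ we also have $\Phi\geq 0$ $\mu$-almost everywhere on $[0,\pi]$, and Prop. \ref{prop:compatibilityvNm} gives $A\coex B$. Conversely, if $A\coex B$, Prop. \ref{prop:compatibilityvNm} furnishes a $\mu$-null set off which $\Phi\geq 0$; suppose for contradiction that $\Phi(\theta_0)<0$ at some $\theta_0\in\spec{\Theta}$. By continuity of $\Phi$, there is an open neighbourhood $U$ of $\theta_0$ in $[0,\pi]$ on which $\Phi<0$. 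Since $\theta_0\in\spec{\Theta}=\mathrm{supp}(\mu)$, the set $U$ has strictly positive $\mu$-measure, contradicting the almost-everywhere inequality. Hence $\Phi\geq 0$ throughout $\spec{\Theta}$.

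There is essentially no hard step here: the work has all been done in Prop. \ref{prop:compatibilityvNm} and in the structural characterisation of the $C^*$-algebra. The only point requiring care is the verification that the radicand under the square root in $\coexf$ is non-negative on pairs of effects, so that $\coexf$ is genuinely continuous (and not merely defined almost everywhere); this is precisely why the sharper ``everywhere on $\spec{\Theta}$'' conclusion is available in the $C^*$-algebra setting but not in the general von Neumann algebra setting.
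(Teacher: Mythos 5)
Your argument is correct and follows essentially the same route as the paper: invoke Prop.~\ref{prop:compatibilityvNm}, use the continuity of $M_A$, $M_B$ on $\spec{\Theta}$ coming from the $C^*$-algebra characterisation together with the continuity of $\coexf$, and upgrade the $\mu$-almost-everywhere inequality to all of $\spec{\Theta}$ via the fact that $\spec{\Theta}$ is the support of $\mu$. The only difference is that you spell out details the paper leaves implicit (non-negativity of the radicand and the positive-measure-neighbourhood contradiction), which is fine.
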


\begin{proof}
Now the matrix-valued functions $M_A$ and $M_B$ are continuous on $\spec{\Theta}$.
Since $\coexf$ is a continuous function of a pair of qubit effects, the claim follows because $\spec{\Theta}$ is the support of the measure $\mu$.
\end{proof}

\section{Coexistence of projections multiplied by central element}\label{sec:scaled}

Given two noncommutative projections $P_1$ and $P_2$, one can ask how to "smooth" them into coexistent effects without destroying their properties too much. 
Alternatively, we may want to know how much noise $P_1$ and $P_2$ can tolerate and still be non-coexistent.
 An appealing way to study this question is suggested by the structure of the algebra $\algebra$: multiply the projections by suitable central elements. This idea covers a number of interesting special cases.

We now consider effects $A=f_1(C)P_1$, $B=f_2(C)P_2$, where $f_i:[0,1]\to [0,1]$ are measurable functions and $C$ is the previously defined central element $C=\id -(P_1-P_2)^2$. Before characterising the coexistence, we show that the sufficient condition (GINF) is, interestingly, equivalent to coexistence in this setting. 
We start with a simple result.

\begin{proposition}\label{prop:rank1} 
Let $P_1,P_2$ be projections and assume that $\rank P_1=1$.
The effects $sP_1$ and $tP_2$ are coexistent if and only if they commute or $sP_1+tP_2\leq \id$.
\end{proposition}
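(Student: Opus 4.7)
The ``if'' direction is immediate: commuting effects are automatically coexistent, and when $sP_1+tP_2\leq\id$ the four-effect collection $\{0,\,sP_1,\,tP_2,\,\id-sP_1-tP_2\}$ is a joint observable. The plan therefore targets the ``only if'' direction. I exclude the trivial sub-cases $s=0$, $t=0$, and $[P_1,P_2]=0$, and aim to show that if $sP_1\coex tP_2$ then $sP_1+tP_2\leq\id$.

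The first main step is a structural reduction via the algebra of two projections. Because $\rank P_1=1$, the $+1$-eigenspace $\hik_0$ of $P_1|_{\hik}$ has dimension at most one. If $\dim\hik_0=0$, then $P_1|_{\hik}=0$, which would make $P_1$ and $P_2$ commute on $\hik$ (and also on $\hik^\perp$ by definition), contradicting $[P_1,P_2]\neq 0$. Hence $\dim\hik_0=1$ and consequently $P_1|_{\hik^\perp}=0$; by the unitary equivalence $\hik_0\simeq\hik_1$ from the ``theory of two projections'' we get $\dim\hik=2$ and $\spec{\Theta}=\{\theta_0\}$ for some $\theta_0\in(0,\pi)$. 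On $\hik^\perp$ the effect $sP_1$ vanishes and $tP_2$ commutes with it, so both coexistence and the inequality $sP_1+tP_2\leq\id$ hold automatically there. The whole question therefore reduces to the two-dimensional block $\hik\simeq\complex^2$, where Prop.~\ref{prop:compatibilityvNm} gives that coexistence is equivalent to $\coexf\bigl(M_{sP_1}(\theta_0),M_{tP_2}(\theta_0)\bigr)\geq 0$.

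The second step is a direct computation with
\begin{equation*}
M_{sP_1}(\theta_0)=\tfrac{s}{2}(\id+\sigma_z),\qquad M_{tP_2}(\theta_0)=\tfrac{t}{2}(\id+\sin\theta_0\,\sigma_x+\cos\theta_0\,\sigma_z).
\end{equation*}
Since $M_{sP_1}(\theta_0)$ is rank one, I use the simplified rank-one form of $\coexf$ recalled in Sect.~\ref{sec:review}; after collecting terms in the auxiliary variable $u:=st(1-\cos\theta_0)$ the expression telescopes to
\begin{equation*}
\coexf\bigl(M_{sP_1}(\theta_0),M_{tP_2}(\theta_0)\bigr)=2u\bigl[\,2-2(s+t)+u\,\bigr].
\end{equation*}
For $s,t>0$ and $\theta_0\in(0,\pi)$ the prefactor $2u$ is strictly positive, so the inequality $\coexf\geq 0$ is equivalent to the scalar condition $\cos\theta_0\leq 1+\tfrac{2(1-s-t)}{st}$. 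Independently, diagonalising $sP_1+tP_2$ on $\hik\simeq\complex^2$ in the Pauli basis gives largest eigenvalue $\tfrac{1}{2}\bigl(s+t+\sqrt{s^2+t^2+2st\cos\theta_0}\bigr)$, and requiring this to be $\leq 1$ (then squaring, which is legal since $s+t\leq 2$) yields precisely the same inequality in $\cos\theta_0$. Combining the two equivalences with the trivial behaviour on $\hik^\perp$ delivers $sP_1\coex tP_2\ \Leftrightarrow\ sP_1+tP_2\leq\id$.

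The main obstacle I expect is the structural step: one must verify that under $\rank P_1=1$ together with $[P_1,P_2]\neq 0$ the entire noncommutative action is confined to a single two-dimensional block of $\algebra$, so that Prop.~\ref{prop:compatibilityvNm} reduces to a single-point spectrum of $\Theta$. Once this reduction is secured, the remaining work is the routine algebraic check that the qubit compatibility inequality and the operator inequality $sP_1+tP_2\leq\id$ cut out the same region in the single spectral parameter $\cos\theta_0$.
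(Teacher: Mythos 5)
Your proposal is correct, but it takes a genuinely different route from the paper. The paper's own proof is a short, elementary argument straight from the definition of $\mathcal{C}(A,B)$: if $G\in\mathcal{C}(sP_1,tP_2)$, then $0\leq G\leq sP_1$ with $\rank P_1=1$ forces $G=g_1sP_1$; the condition $g_1sP_1\leq tP_2$ then forces $[P_1,P_2]=0$ unless $g_1=0$, so in the noncommuting case $G=0$ and the remaining inequality $\id+G\geq sP_1+tP_2$ is exactly $sP_1+tP_2\leq\id$. That argument is dimension-free, never touches the spectrum of $\Theta$, and keeps the proposition logically independent of the qubit theorem and of the main result. You instead run the structural reduction: $\rank P_1=1$ plus noncommutativity forces $\dim\hik_0=1$, hence $\dim\hik=2$ and $\spec{\Theta}=\{\theta_0\}$, with everything trivial on $\hik^\perp$; then you apply the qubit criterion and verify by computation that $\coexf\bigl(M_{sP_1}(\theta_0),M_{tP_2}(\theta_0)\bigr)=2u\bigl[2-2(s+t)+u\bigr]$ with $u=st(1-\cos\theta_0)$, and that $\coexf\geq 0$ and $\|sP_1+tP_2\|\leq 1$ cut out the same region $\cos\theta_0\leq 1+\tfrac{2(1-s-t)}{st}$ (both computations check out). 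This is not circular, since Prop.~\ref{prop:compatibilityvNm} does not rely on Prop.~\ref{prop:rank1}, and your approach buys an explicit quantitative threshold in $\cos\theta_0$ — essentially re-deriving the rank-one instance of condition \eqref{eq:scaledP} in Prop.~\ref{prop:scaling} — but at the cost of much heavier machinery (the two-projection decomposition, Theorem \ref{qubitcoexthm}, and the main result) where the paper needs only the observation that an operator squeezed under a rank-one projection is a multiple of it. One stylistic point: since after your reduction $\spec{\Theta}$ is a single point, Prop.~\ref{discretedirectsum} together with Theorem \ref{qubitcoexthm} already suffices; invoking the full Prop.~\ref{prop:compatibilityvNm} (whose proof needs the measurable-selection machinery) is more than the situation requires.
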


\begin{proof}
Since commutativity and comparability are always sufficient conditions for coexistence, we only need to show that whenever $sP_1$ and $tP_2$ are coexistent, then they commute or $sP_1+tP_2\leq \id$.
Hence, suppose that $sP_1$ and $tP_2$ do not commute (in particular $s\neq 0 \neq t$) but are coexistent.
By (\ref{coexistenteffects}) there exists $G\in\eh$ such that $G\leq sP_1$, $G\leq tP_2$ and $\id+G \geq sP_1+tP_2$. 
Since $\rank sP_1=1$, we have $G=g_1 sP_1$ for some real number $g_1\geq 0$.
But then $\frac{g_1 s}{t} P_1 \leq P_2$, implying that $[\frac{g_1 s}{t} P_1,P_2]=0$.
Since $sP_1$ and $tP_2$ do not commute, it follows that $g_1=0$ and therefore $G=0$.
Hence, $sP_1+tP_2\leq \id$.
\end{proof}

Prop. \ref{prop:rank1} is not valid if the assumption $\rank P_1=1$ is removed.
For instance, let $\hi=\complex^3$ and choose 
\begin{equation*}
P_1=\id_\complex \oplus P_x \, , \quad  P_2=\id_\complex \oplus P_y \, , 
\end{equation*}
where $P_j = \half (\id + \sigma_j)$ for $j=x,y$.
Using either Theorem \ref{qubitcoexthm} or Prop.~\ref{prop:rank1}, we see that $s P_x$ and $s P_y$ are coexistent if and only if $0 \leq s \leq 2-\sqrt{2}$.
It follows that also $sP_1$ and $sP_2$ are coexistent if and only if $0 \leq s \leq 2-\sqrt{2}$ (see the note on direct sums in Subsec. \ref{sec:smaller2larger}).
But the effects $(2-\sqrt{2}) P_1$ and $(2-\sqrt{2}) P_2$ do neither commute nor satisfy $sP_1+sP_2\leq \id$ ($s$ would have to be smaller than $1/2$).
They do, however, satisfy \eqref{eq:ginf}.
The valid generalization of Prop. \eqref{prop:rank1} is the following.

\begin{proposition}\label{prop:rank1gen} The effects $f_1(C)P_1$ and $f_2(C)P_2$ are coexistent if and only if they satisfy \eqref{eq:ginf}.
\end{proposition}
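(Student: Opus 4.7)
The implication $\eqref{eq:ginf}\Rightarrow A\coex B$ is already known from \cite{Heinosaari13}, so the plan is to establish the converse by reducing to the qubit fibres via Prop.~\ref{prop:compatibilityvNm}, applying the rank-one result Prop.~\ref{prop:rank1} pointwise, and then assembling the resulting pointwise comparability into global \eqref{eq:ginf}.

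First I would determine how $C$ acts on a typical qubit fibre. A short computation from \eqref{eq:C2representation} shows $M_C(\theta) = c(\theta)\,\id_{\complex^2}$ with $c(\theta):=(1+\cos\theta)/2$, so by \eqref{functions}
\[
M_A(\theta) = s(\theta)\,M_{P_1}(\theta), \qquad M_B(\theta) = t(\theta)\,M_{P_2}(\theta),
\]
where $s(\theta):=f_1(c(\theta))$ and $t(\theta):=f_2(c(\theta))$ lie in $[0,1]$. For every $\theta\in(0,\pi)$ the matrices $M_{P_1}(\theta)$ and $M_{P_2}(\theta)$ are non-commuting rank-one projections on $\complex^2$, and since $0,\pi$ are not eigenvalues of $\Theta$, the measure $\mu$ assigns no mass to $\{0,\pi\}$; hence this is the situation for $\mu$-almost every $\theta$.

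Assuming $A\coex B$, Prop.~\ref{prop:compatibilityvNm} yields $\coexf(M_A(\theta),M_B(\theta))\geq 0$ for $\mu$-a.e.~$\theta$, i.e.\ qubit coexistence in each fibre. Applying Prop.~\ref{prop:rank1} to these fibre effects forces either $s(\theta)t(\theta)=0$ or $s(\theta)M_{P_1}(\theta)+t(\theta)M_{P_2}(\theta)\leq \id_{\complex^2}$; both alternatives yield the single bound $M_A(\theta)+M_B(\theta)\leq \id_{\complex^2}$, which rewrites as $M_A(\theta)\leq M_{B^\perp}(\theta)$ and $M_B(\theta)\leq M_{A^\perp}(\theta)$. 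In particular $M_A(\theta)\sqcap M_{B^\perp}(\theta)=M_A(\theta)\geq 0$ and $M_{A^\perp}(\theta)\sqcap M_B(\theta)=M_B(\theta)\geq 0$ for $\mu$-a.e.~$\theta$.

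It remains to lift these fibrewise bounds to $\hi$. Since $\sqcap$ is defined through the functional calculus of a self-adjoint operator, it respects both the decomposition $\hi=\hik\oplus\hik^\perp$ and the direct integral \eqref{dirintHrep} on $\hik$, so the above gives $(A\sqcap B^\perp)|_{\hik}\geq 0$ and $(A^\perp\sqcap B)|_{\hik}\geq 0$. On the commutation domain $\hik^\perp$ the operators $A$, $B$, $A^\perp$, $B^\perp$ are mutually commuting effects, so on each joint eigenspace the generalised infima $A\sqcap B^\perp$ and $A^\perp\sqcap B$ reduce to minima of numbers in $[0,1]$ and are automatically non-negative. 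Combining the two parts gives $A\sqcap B^\perp\geq 0$ and $A^\perp\sqcap B\geq 0$, which is the second branch of \eqref{eq:ginf}. The main technical obstacle I anticipate is the measure-theoretic bookkeeping in the direct integral---ensuring that the $\mu$-null sets where fibrewise coexistence or the fibrewise identification of $|A-B^\perp|$ fails can be discarded simultaneously---but this is handled by the same measurable functional calculus already underlying Prop.~\ref{structureprop}.
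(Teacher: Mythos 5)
Your proof is correct and follows essentially the same route as the paper's: reduce to the qubit fibres (where $M_{f_i(C)P_i}(\theta)=f_i(\cos^2(\theta/2))\,M_{P_i}(\theta)$), apply Prop.~\ref{prop:rank1} fibrewise so that coexistence forces $M_A(\theta)+M_B(\theta)\leq\id_{\complex^2}$ $\mu$-almost everywhere, and transfer the resulting positivity conditions of \eqref{eq:ginf} back to $\hi$ via \eqref{functions}, \eqref{poscond} and the trivial behaviour on the commutation domain. The only cosmetic differences are that you invoke Prop.~\ref{prop:compatibilityvNm} (indeed the more appropriate choice, since $f_i$ is merely measurable) where the paper cites Corollary~\ref{prop:compatibility}, and you handle $\hik^\perp$ by commutativity of the restricted effects where the paper appeals to the four-subspace splitting of the commutation domain.
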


\begin{proof} 
We first make a general observation: \eqref{eq:ginf} holds for two effects $A,B\in \algebra$ if and only if \eqref{eq:ginf} holds for the qubit effects $M_{A}(\theta)$ and $M_{B}(\theta)$ for $\mu$-almost all $\theta\in \spec{\Theta}$. 
Namely, since \eqref{eq:ginf} consists of positivity conditions, we can check them separately in the $\hik$ and $\hik^\perp$. Using \eqref{functions} and \eqref{poscond} takes care of the former, and for the latter we can use the fact (see e.g. \cite{Halmos69}) that the commutation domain splits into four mutually orthogonal subspaces, and the corresponding restrictions of any element $A\in \algebra$ is a multiple of the identity. Since all the positivity conditions in \eqref{eq:ginf} hold trivially for numbers between zero and one, the coexistence in $\hik^\perp$ is trivially valid.

Let us then consider the effects $A=f_1(C)P_1$ and $B=f_2(C)P_2$.
For each $\theta$, the qubit effects $M_A(\theta)$ and $M_B(\theta)$ are multiples of the rank-1 qubit projections given in \eqref{eq:C2representation}, because $M_{f_i(C)}(\theta)=f_i(\cos^2(\theta/2))\id_{\complex^2}$.
By Prop.~\ref{prop:rank1} they are coexistent if and only if they commute or $M_A(\theta)+M_B(\theta)\leq \id$; both are special cases of \eqref{eq:ginf}. 
(Actually, the commutative case only occurs for $\theta=0$ or $\pi$, but this set has measure zero.) Therefore, by Corollary \ref{prop:compatibility} $A$ and $B$ are coexistent if and only if \eqref{eq:ginf} holds for all $M_{A}(\theta)$ and $M_{B}(\theta)$.
By our earlier observation, this means that $A$ and $B$ satisfy \eqref{eq:ginf}.
\end{proof}

Although checking the positivity conditions in \eqref{eq:ginf} completely characterizes coexistence in the present case, it may be easier to use Corollary~\ref{prop:compatibility} to decide coexistence of a pair of effects. 
This leads to the following result.

\begin{proposition}\label{prop:scaling}
Let $P_1$ and $P_2$ be two non-commuting projections and $f_1,f_2: [0,1]\to[0,1]$ measurable functions.
The effects $f_1(C)P_1$ and $f_2(C)P_2$ are coexistent if and only if
\begin{equation}
\label{eq:funcP}
\inf_{h\in \spec{H}}\left(\frac{1-f_1(h)}{f_1(h)} \frac{1-f_2(h)}{f_2(h)}-h\right) \geq 0.
\end{equation}
In particular, for $s,t\in (0,1]$ the effects $sP_1$ and $tP_2$ are coexistent if and only if
\begin{equation}
\label{eq:scaledP}
\|H\| \leq \frac{1-s}{s} \frac{1-t}{t} \, , 
\end{equation}
and the effects $sP_1P_2P_1$ and $tP_2P_1P_2$ are coexistent if and only if
\begin{equation}
\label{eq:scaledPPP}
\inf_{h\in \spec{H}}\left(\frac{1-sh}{sh} \frac{1-th}{th}-h\right) \geq 0.
\end{equation}
\end{proposition}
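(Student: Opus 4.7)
My plan is to reduce the problem to the qubit fibres using Prop.~\ref{prop:compatibilityvNm} and then exploit the rank-$1$ structure, applying Prop.~\ref{prop:rank1} fibrewise. First I would observe that since $\Theta = \arccos(2H-\id_{\hik_0})$, the spectral decomposition of $H$ is obtained from that of $\Theta$ via $h = \cos^2(\theta/2) = (1+\cos\theta)/2$. By functional calculus (cf.~\eqref{functions}) we get $M_{f_i(C)}(\theta) = f_i(h)\id_{\complex^2}$, and substituting in \eqref{eq:C2representation} gives the two fibre effects
\begin{align*}
M_A(\theta) &= \tfrac{1}{2}f_1(h)\,(\id_{\complex^2} + \sigma_z), \\
M_B(\theta) &= \tfrac{1}{2}f_2(h)\,(\id_{\complex^2} + \sin\theta\,\sigma_x + \cos\theta\,\sigma_z),
\end{align*}
which are multiples of \emph{rank-$1$} qubit projections.

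Next, since $P_1, P_2$ are assumed non-commuting, the points $\theta\in\{0,\pi\}$ where $\sin\theta = 0$ form a $\mu$-null set (they cannot be eigenvalues of $\Theta$, which lives on $\hik$). Thus $M_A(\theta)$ and $M_B(\theta)$ fail to commute for $\mu$-almost every $\theta$, so Prop.~\ref{prop:rank1} tells us that fibrewise coexistence is equivalent to $M_A(\theta) + M_B(\theta) \leq \id_{\complex^2}$. I would then compute the Bloch representation of the sum and use that the operator norm of $\frac{1}{2}(\gamma\id + \vec v \cdot \vsigma)$ equals $\frac{1}{2}(\gamma + \|\vec v\|)$; a short calculation yields $\|\vec v\|^2 = f_1(h)^2 + f_2(h)^2 + 2f_1(h)f_2(h)\cos\theta$, and the inequality $f_1+f_2+\|\vec v\| \leq 2$ simplifies (after squaring and using $\cos\theta = 2h-1$) to
\begin{equation*}
f_1(h)f_2(h)\,h \;\leq\; (1-f_1(h))(1-f_2(h)).
\end{equation*}
Dividing by $f_1(h)f_2(h)$ (with the standard convention that the condition is vacuously satisfied when $f_i(h)=0$) rearranges to exactly \eqref{eq:funcP}. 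By Prop.~\ref{prop:compatibilityvNm}, global coexistence is equivalent to this holding $\mu$-a.e.\ on $\spec{\Theta}$; translated back via $h = (1+\cos\theta)/2$, this is the condition on $\spec{H}$.

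For the two special cases, the constant choice $f_1 \equiv s$, $f_2 \equiv t$ turns \eqref{eq:funcP} into $h \leq \frac{(1-s)(1-t)}{st}$ for all $h \in \spec{H}$, i.e.\ $\|H\| \leq \frac{(1-s)(1-t)}{st}$, which rearranges to \eqref{eq:scaledP}. For the second case I would use the identity $P_1 P_2 P_1 = CP_1$ (immediate from \eqref{eq:centralC-2} since $P_1 P_1^\perp = 0$) and analogously $P_2 P_1 P_2 = CP_2$; hence $sP_1P_2P_1 = f_1(C)P_1$ with $f_1(x)=sx$ and likewise $f_2(x)=tx$, and \eqref{eq:funcP} becomes \eqref{eq:scaledPPP}. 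The main technical subtlety I expect is the interpretation of $\inf_{h\in\spec{H}}$ for merely measurable $f_i$: strictly, the fibrewise argument gives an essential infimum with respect to the spectral measure of $H$, but for the continuous $f_i$ arising in the two explicit corollaries this coincides with the ordinary infimum since $\spec{H}$ is the support of $\mu$.
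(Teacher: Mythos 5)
Your proof is correct, but it follows a different route from the paper's. The paper proves Prop.~\ref{prop:scaling} by writing down the fibre effects \eqref{eq:scaledM}, computing the qubit coexistence function explicitly,
\begin{equation*}
\coexf\bigl(M_{f_1(C)P_1}(\theta),M_{f_2(C)P_2}(\theta)\bigr)=2\alpha_1\alpha_2(1-\cos\theta)\bigl[\alpha_1\alpha_2(1-\cos\theta)+2(1-\alpha_1-\alpha_2)\bigr],
\end{equation*}
with $\alpha_i=f_i(\cos^2(\theta/2))$, and then simplifying its nonnegativity to $\cos^2(\theta/2)\leq\frac{1-\alpha_1}{\alpha_1}\frac{1-\alpha_2}{\alpha_2}$, invoking Corollary~\ref{prop:compatibility}. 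You instead bypass $\coexf$ entirely: you use the fibrewise reduction, note that the fibre effects are multiples of rank-one projections, apply Prop.~\ref{prop:rank1} (coexistence iff commuting or sum bounded by $\id$), and convert the condition $M_A(\theta)+M_B(\theta)\leq\id$ into the same inequality via a Bloch-norm computation; this is essentially the mechanism behind the paper's Prop.~\ref{prop:rank1gen}, pushed to an explicit inequality. Your route is more elementary in that it does not need the full qubit criterion of Theorem~\ref{qubitcoexthm}, and it is in fact more careful than the paper on one point: since $f_1,f_2$ are only measurable, the correct tool is Prop.~\ref{prop:compatibilityvNm} with a $\mu$-almost-everywhere condition (and an essential infimum), exactly as you flag, whereas the paper cites Corollary~\ref{prop:compatibility}, which strictly presupposes continuity of $M_A$, $M_B$ on $\spec{\Theta}$. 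Two small points to tighten: the claim that the fibre effects fail to commute $\mu$-a.e.\ holds only where $f_1(h)f_2(h)\neq 0$ (where some $f_i(h)=0$ the fibres commute, coexistence is trivial, and the inequality holds under your stated convention, so the equivalence survives — just say this explicitly); and when squaring $\|\vec v\|\leq 2-f_1(h)-f_2(h)$ you should note that the right-hand side is nonnegative because $f_1,f_2\leq 1$, which makes the squaring reversible.
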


\begin{proof} 
Fix $\theta\in\sigma(\Theta)$, and denote $\alpha_i(\theta)= f_i(\cos^2(\theta/2))$, $i=1,2$. Now 
\begin{equation}
\label{eq:scaledM}
\begin{split}
M_{f_1(C)P_1}(\theta) &= \frac{\alpha_1(\theta)}{2}(\id+\sigma_z),\\
M_{f_2(C)P_2}(\theta) &= \frac{\alpha_2(\theta)}{2}(\id+\sin\theta\sigma_x+\cos\theta\sigma_z).
\end{split}
\end{equation}
A direct computation shows that
\begin{multline*}
\coexf(M_{f_1(C)P_1}(\theta),M_{f_2(C)P_2}(\theta))=
2\alpha_1(\theta)\alpha_2(\theta)(1-\cos\theta)\\
\times\left[\alpha_1(\theta)\alpha_2(\theta)(1-\cos\theta)+2(1-\alpha_1(\theta)-\alpha_2(\theta))\right]
\end{multline*}
By Corollary \ref{prop:compatibility}, we have $f_1(C)P_1\coex f_2(C)P_2$ if and only if this expression is nonnegative for all $\theta\in \spec{\Theta}$. After some algebraic manipulation we find that the inequality is equivalent to
\begin{equation}\label{scaledcondition}
\cos^2\frac{\theta}{2}\leq\frac{1-\alpha_1(\theta)}{\alpha_1(\theta)}\frac{1-\alpha_2(\theta)}{\alpha_2(\theta)},
\end{equation}
from which (\ref{eq:funcP}) follows, because $H=\cos^2\frac{\Theta}{2}$.
\end{proof}

\begin{example}\label{QP}
Here we look at an infinite-dimensional case, the scaled position and momentum projections corresponding to half-lines. Let $\hi=\ltwo{\real}$ and let $Q$ and $P$ be the usual position and momentum operators on the real line.
One of the simplest binarizations is the dilation invariant one: $P_1=\chi_{[0,\infty)}(Q)$, and $P_2=\chi_{[0,\infty)}(P)$. (One can also divide at some point other than $0$; the resulting pair of projections is unitarily equivalent to this one via a suitable Weyl translation.)

The two projections $P_1$ and $P_2$ are totally non-commutative \cite{BuLa86}, and the dilation invariance of $P_1$ and $P_2$ can be used to show that they are unitarily equivalent to the decomposable operators
\begin{equation}\label{eq:scaledMQP}
\begin{split}
P_1(v) &=\frac{1}{2}(\id+\sigma_z) \, ,\\
P_2(v) &=\frac{1}{2}[\id+{\rm sech}(\pi v/2)\sigma_x+\tanh(\pi v/2) \sigma_z],
\end{split}
\end{equation}
in the direct integral space $L^2(\real,\complex^2)=\int_{\real}^{\oplus} \complex^2 dv$; see \cite{KiWe10}. In this representation, $\mathcal K_0=L^2(\real,dv)$, the operator $\Theta$ is the diagonal multiplication operator given by the function $f_\Theta(v)=\arccos \tanh(\pi v/2)$, and we have $\spec{\Theta}=[0,\pi]$. In order to get to the spectral representation of $\Theta$, we just need to define a measure $\mu$ on Borel sets of $[0,\pi]$ via
\begin{equation}\label{measureQP}
\mu(X) := \int_{f_\Theta^{-1}(X)} dx,
\end{equation}
where $dx$ is the Lebesgue measure on $\real$. The function $f_\Theta:\real\to (0,\pi)$ induces a unitary operator from $L^2(\real)$ onto $L^2([0,\pi],\mu)$, which indeed transforms \eqref{eq:scaledMQP} exactly to \eqref{eq:C2representation}.

To summarize, $\hik^\perp=\{0\}$, and the algebra $\algebra$ is completely characterized by \eqref{dirintHrep}, which is now a genuine direct integral over the full interval $\spec{\Theta}=[0,\pi]$ (the spectrum is purely absolutely continuous), $\hi_\theta=\complex$ for each $\theta\in [0,\pi]$, and $\mu:\mathcal{B}([0,\pi])\to [0,\infty)$ is absolutely continuous with respect to the Lebesgue measure.

Since $\spec{\Theta}=[0,\pi]$, it follows that $sP_1$ and $tP_2$ are coexistent if and only if $s+t \leq 1$, as the region of admissible scalings $(s,t)$ depends only on the largest element of the spectrum of $H$, i.e. the bottom of the spectrum of $\Theta$.

In general the region of admissible pairs $(s,t)$ decreases monotonically with this number. The minimal region, characterised by $s+t\leq 1$, is attained if and only if $\inf \spec{\Theta}=0$. Since $0$ is not an eigenvalue of $\Theta$, this can only happen if the Hilbert space is infinite-dimensional; we give an explicit example of this case below.
\end{example}

\begin{example}\label{ex:copies}
It was noted in Example~\ref{ex:tensor} that $A_1\otimes A_2$ and $B_1 \otimes B_2$ can be coexistent even if the pairs $A_1,B_1$ and $A_2,B_2$ are not. 
We now use Prop.~\ref{prop:scaling} to examine what happens to coexistence of the scaled projections when we consider multiple copies of the effects. Let us take a simple example with $P_1=|\psi_1\rangle\langle\psi_1|$ and $P_2=|\psi_2\rangle\langle\psi_2|$, with $\psi_1,\psi_2\in \hi$ non-parallel unit vectors. Now the operator $H$ is just the scalar $|\langle\psi_1|\psi_2\rangle|^2$, acting in the one-dimensional space spanned by $\psi_1$, so $sP_1\coex tP_2$ if and only if 
\begin{equation}
\label{eq:singlecond}
|\langle\psi_1|\psi_2\rangle|^2\leq\frac{1-s}{s}\frac{1-t}{t} \, .
\end{equation}
We take $n$ copies of these systems, $\hi_{\rm tot}=\hi \otimes \cdots \otimes \hi$ and look at the coexistence of the effects $(sP_1)^{\otimes n}$ and $(tP_2)^{\otimes n}$. 
These are also scaled projections, $(sP_1)^{\otimes n}=s^n P_1^{\otimes n}$ and $(tP_2)^{\otimes n}=t^n P_2^{\otimes n}$. The two projections $P_1^{\otimes n}$ and $P_2^{\otimes n}$ are rank one, the corresponding unit vectors being $\psi_1^{\otimes n}$ and $\psi_2^{\otimes n}$, respectively, so their central element $H$ is just $|\langle\psi_1|\psi_2\rangle|^{2n}$, acting on the one-dimensional subspace spanned by the vector $\psi_1^{\otimes n}$. 
Hence, according to Prop.~\ref{prop:scaling}, $(sP_1)^{\otimes n}\coex (tP_2)^{\otimes n}$ if and only if 
\begin{equation}
\label{eq:multicond}
|\langle\psi_1|\psi_2\rangle|^{2n}\leq\frac{1-s^n}{s^n}\frac{1-t^n}{t^n}.
\end{equation}
\begin{figure}
\begin{center}
\includegraphics{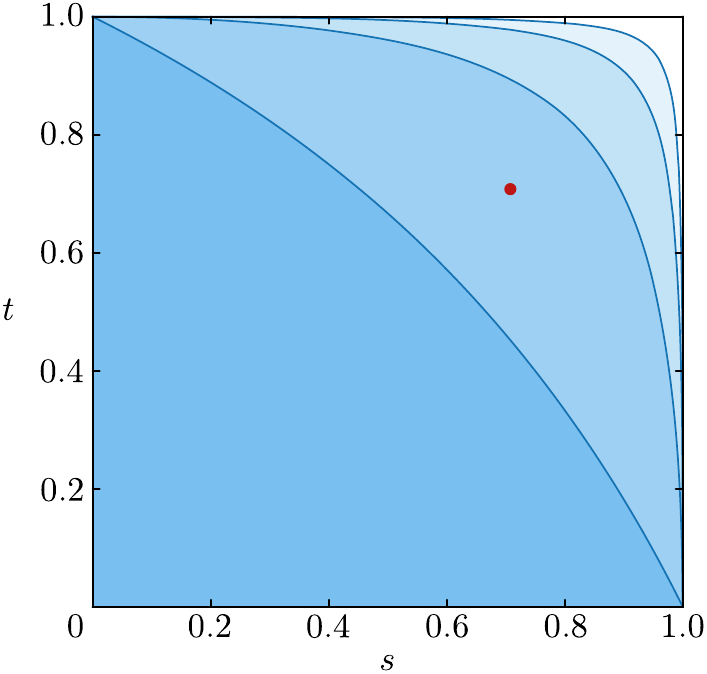}
\end{center}
\caption{Regions of pairs $(s,t)\in[0,1]\times[0,1]$ from Example~\ref{ex:copies} that lead to coexistence effects.
We have depicted $|\langle \psi_1|\psi_2\rangle|=\frac {1}{\sqrt 2}$ for various numbers of copies from $n=1$ (darkest region) to $n=4$ (lightest region). For every pair $(s,t)$ with $s,t<1$ there exists $n$ such that the point lies in admissible region. Example \ref{ex:tensor} is displayed as the red dot in the region of $n=2$ but outside region for $n=1$.}
\label{fig:copies}
\end{figure}
By taking $n$ sufficiently large, any scaling $(s,t)$ with $s,t<1$ leads to a pair of coexistent effects; see Fig.~\ref{fig:copies}.
In particular, for $s=t$ the inequality \eqref{eq:multicond} is equivalent to
\[
s=t\leq\left(\frac{1}{1+|\langle\psi_1|\psi_2\rangle|^n}\right)^{1/n}.
\]
In the case where $|\langle \psi_1|\psi_2\rangle|=\frac {1}{\sqrt 2}$ and $n=1$,  the above bound is $2-\sqrt{2}$ and for $n=2$ the bound is $\sqrt{2/3}$.
This explains the situation in Example \ref{ex:tensor}; the scaling $s=1/\sqrt{2}$ is larger than the bounding value for $n=1$, but nevertheless smaller than that for $n=2$.

\end{example}

\begin{example}\label{ex:dim4}
The existence of pairs of projections with trivial commutation domain $\{0\}$ requires that the dimension of the Hilbert space be even. This is clear from the decomposition, because $\mathcal K \simeq \hik_0\otimes \complex^2$ has even dimension. Here we give an example of a four-dimensional case.

Let $\{\psi_1,\psi_2,\psi_3,\psi_4\}$ be an orthonormal basis for $\hi=\complex^4$.
We denote $P_1=\kb{\psi_1}{\psi_1}+\kb{\psi_2}{\psi_2}$ and $P_2=\fourier P_1 \fourier^\ast$, where $\fourier$ is the finite Fourier transform w.r.t. basis $\{\psi_1,\psi_2,\psi_3,\psi_4\}$.
As matrices they read
\begin{align*}
P_1=\left( \begin{array}{cccc} 1 & 0 & 0 & 0 \\ 0 & 1 & 0 & 0  \\ 0 & 0 & 0 & 0  \\ 0 & 0 & 0 & 0 \end{array} \right) \, \qquad
P_2=1/4 \left( \begin{array}{cccc} 2 & 1+i  & 0 & 1-i \\ 1-i & 2 & 1+i & 0  \\ 0 & 1-i & 2 & 1+i  \\ 1+i & 0 & 1-i & 2 \end{array} \right)
\end{align*}
We now have $\ker \bigl( [P_1,P_2] \bigr)=\{ 0\}$, $\hik=\hi$ and $\hik_0$ is the two-dimensional subspace spanned by $\psi_1$ and $\psi_2$.
Further,
\begin{equation*}
H=P_1P_2P_1|_{\hik_0}=1/4 \left( \begin{array}{cc} 2 & 1+i  \\ 1-i & 2  \end{array} \right) \, , 
\end{equation*}
so that the spectrum of operator $\Theta$ is $\{\pi/4,3\pi/4\}$.

\begin{figure}
\begin{center}
\includegraphics{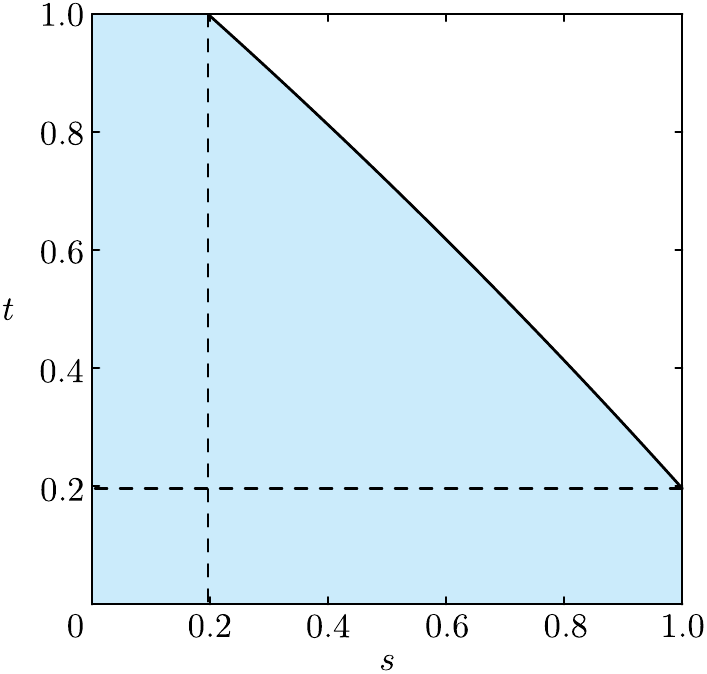}
\end{center}
\caption{Effects $A_s$ and $B_t$ from Example~\ref{ex:dim4} are coexistent for $s$ and $t$ complying with Eq.~(\ref{eq:dim4result}) leading to the shaded region of coexistence bounded by hyperbole (thick line). For choices of $t$ below the dashed line, the two effects are coexistent for any $s$ (similarly for small $s$).}
\label{fig:dim4}
\end{figure}

Let us consider two 1-parameter families of effects,
\begin{align*}
A_s  := s P_1P_2P_1=sCP_1 \, , \qquad B_t := t P_2P_1P_2=t CP_2 \, , 
\end{align*}
with $0 < s,t \leq 1$, and $C$ the central element. From this we immediately see that
\begin{align*}
M_{A_s}(\theta) &= s \cos^2(\theta/2) M_{P_1}(\theta), & M_{B_t}(\theta) &= t \cos^2(\theta/2) M_{P_2}(\theta)
\end{align*}
and we can readily apply \eqref{eq:scaledPPP}. 
We thus obtain condition
\[
\cos^2(\theta/2)\leq\frac{1-s\cos^2(\theta/2)}{s\cos^2(\theta/2)}\frac{1-t\cos^2(\theta/2)}{t\cos^2(\theta/2)}
\]
that has to hold for both $\theta\in\{\pi/4,3\pi/4\}$. 
One can easily check that for $\theta=3\pi/4$ the inequality is trivially fulfilled and therefore we are left with inequality
\begin{equation}
\label{eq:dim4result}
q\leq\frac{1-qs}{qs}\frac{1-qt}{qt},\qquad q\equiv\cos^2\frac{\pi}{8}=\frac{2+\sqrt{2}}{4}.
\end{equation}
This fully characterizes the coexistence region. We note, that by setting $s=1$ the inequality transforms into $t\leq 8(3-2\sqrt{2})/7$ meaning that there are subregions of coexistence within full range of $s$ (or $t$); see Fig.~\ref{fig:dim4}.
\end{example}

\section{Details of the proof of the main result}\label{sec:details}

\subsection{Technical results on coexistence}
We first present some technical results on the set $\mathcal{C}(A,B)$ (defined in \eqref{coexistenteffects}) which are needed later but are also interesting in their own right.

\begin{lemma}\label{comcoex} 
Let $\mathcal{A}\subset \lh$ be a von Neumann algebra having an Abelian commutant $\mathcal{A}'$, and $A,B\in \mathcal{A}$ two coexistent effects. 
Then there exists four effects $G_1,G_2,G_3,G_4\in\mathcal{A}$ that satisfy \eqref{eq:G}.
\end{lemma}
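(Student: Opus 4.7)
The plan is to reduce everything to finding a single $G \in \mathcal{C}(A,B)$ that additionally lies in $\mathcal{A}$; once such a $G$ is at hand, setting $G_1 := G$, $G_2 := A - G$, $G_3 := B - G$ and $G_4 := \id - A - B + G$ automatically produces four effects satisfying \eqref{eq:G}, and they all lie in $\mathcal{A}$ because $A$, $B$, $\id$ and $G$ do. By hypothesis $\mathcal{C}(A,B)$ is nonempty, and from the discussion following \eqref{coexistenteffects} it is a weak-* compact convex subset of the unit ball of $\lh$.

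To produce such a $G$, I would exploit the inner action of the unitary group of the commutant. For each unitary $U \in \mathcal{A}'$, define $\phi_U(X) := UXU^*$. Since $A,B \in \mathcal{A}$ commute with every element of $\mathcal{A}'$, a direct check shows that $\phi_U$ preserves each of the operator inequalities defining \eqref{coexistenteffects}, so $\phi_U(\mathcal{C}(A,B)) \subseteq \mathcal{C}(A,B)$. Each $\phi_U$ is plainly affine, and it is weak-* continuous because for any trace-class operator $T$ one has $\tr{T\phi_U(X)} = \tr{(U^*TU)X}$ with $U^*TU$ again trace-class.

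The crucial use of the hypothesis is that $\mathcal{A}'$ is abelian: this makes $\{\phi_U\}_{U \text{ unitary in } \mathcal{A}'}$ a \emph{commuting} family of weak-* continuous affine self-maps of the nonempty compact convex set $\mathcal{C}(A,B)$. The Markov--Kakutani fixed point theorem then yields a common fixed point $G \in \mathcal{C}(A,B)$. Since $G$ commutes with every unitary in $\mathcal{A}'$, and every von Neumann algebra is the linear span of its unitaries, $G$ commutes with all of $\mathcal{A}'$; hence $G \in \mathcal{A}'' = \mathcal{A}$, and the desired decomposition follows.

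I expect the only mildly delicate point to be verifying the hypotheses of Markov--Kakutani in the weak-* setting, especially the weak-* continuity of $\phi_U$; this reduces to the standard identification of the predual of $\lh$ with the trace-class operators, already invoked in the footnote after Prop. \ref{limitcoexlemma}. Everything else --- invariance of $\mathcal{C}(A,B)$ under the action, commutativity of the family, and the passage from fixed point to membership in $\mathcal{A}$ --- is routine once this framework is in place.
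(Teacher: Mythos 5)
Your proposal is correct and follows essentially the same route as the paper: apply the Markov--Kakutani fixed point theorem to the commuting family of conjugations by unitaries of the abelian commutant acting on the weak-* compact convex set $\mathcal{C}(A,B)$, then use the span-by-unitaries argument and the bicommutant theorem to place the fixed point in $\mathcal{A}$. The only cosmetic difference is conjugating by $U\,\cdot\,U^*$ rather than $U^*\,\cdot\,U$, which changes nothing.
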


\begin{proof} Since $A$ and $B$ are coexistent, we have $\mathcal{C}(A,B)\neq \emptyset$. 
Each member of the commuting family
\begin{equation*}
\{U^*\,\cdot\,U\mid U\in \mathcal{A}', \, U \text{ unitary }\}
\end{equation*}
of linear, weak-* continuous maps keeps $\mathcal{C}(A,B)$ invariant because $A$ and $B$ commute with each $U$. 
Since $\mathcal{C}(A,B)$ is weak-* compact and convex, the Markov-Kakutani fixed point theorem (see e.g. \cite[Theorem V.20]{MMMPI80}) says that there exists $E\in \mathcal{C}(A,B)$ such that $UE=EU$ for all unitary $U\in \mathcal{A}'$. 
Since each element of a von Neumann algebra is a linear combination of four unitaries from the same algebra, it follows that $E$ commutes with $\mathcal{A}'$. 
By the bicommutant theorem, $E\in \mathcal{A}$.
\end{proof}

The following measurable selection result is essential to the main result of the paper. Recall that $\cal B$ denotes the closed unit ball of $\mathcal{L(H)}$.
\begin{lemma}\label{mselectionlemma}
Let $I$ be a $\sigma$-compact Polish space, and $\mu$ the completion of a Borel measure on $I$ which is finite on compact sets. Suppose that
\begin{align*}
I&\ni x\mapsto A(x)\in \cal B, &  I&\ni x\mapsto B(x)\in \cal B
\end{align*}
are $\mu$-measurable effect-valued functions, such that $A(x)\coex B(x)$ for $\mu$-almost all $x\in I$. Then there exists a family $\{ \G_x\}_{x\in I}$ of four-outcome observables, such that
\begin{itemize}
\item[(a)] $I \ni x\mapsto \G_x(X)\in \cal B$ is $\mu$-measurable for each $X\subseteq \{0,1\}\times \{0,1\}$;
\item[(b)] $\G_x$ is a joint observable for $A(x)$ and $B(x)$ for $\mu$-almost all $x\in I$.
\end{itemize}
\end{lemma}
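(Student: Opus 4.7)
The plan is to reduce the problem to a single $\mu$-measurable selection from the coexistence set and then read off the remaining effects algebraically. Concretely, for every $x$ at which a joint observable exists, choosing $G_1(x)\in \mathcal{C}(A(x),B(x))$ already determines the other three effects of $\G_x$ via \eqref{eq:G}: namely $G_2(x)=A(x)-G_1(x)$, $G_3(x)=B(x)-G_1(x)$, and $G_4(x)=\id-A(x)-B(x)+G_1(x)$. Hence, once a $\mu$-measurable map $x\mapsto G_1(x)\in \cal B$ with $G_1(x)\in \mathcal{C}(A(x),B(x))$ almost everywhere has been produced, property (a) is automatic, since each $\G_x(X)$ is a $\mu$-measurable linear combination of $A(x)$, $B(x)$, and $G_1(x)$.

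To carry out the selection, I would consider the multifunction $F\colon I\to 2^{\cal B}$ given by $F(x):=\mathcal{C}(A(x),B(x))$. Each $F(x)$ is the intersection of four weak-* closed half-cones in the compact Polish space $\cal B$, hence weak-* compact and convex; by assumption, $F(x)\neq\emptyset$ for $\mu$-almost every $x$. Redefining $F$ arbitrarily on the exceptional null set (say $F(x):=\{0\}$), which is permissible because $\mu$ is complete, one may assume $F$ is nonempty and closed-valued on all of $I$.

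The next step is to verify that the graph
\[
\mathrm{Gr}(F)=\{(x,G)\in I\times \cal B : G\in F(x)\}
\]
lies in the relevant product $\sigma$-algebra on $I\times \cal B$. Indeed, $\mathrm{Gr}(F)$ is the intersection of four sets of the form $\{(x,G): T(x,G)\geq 0\}$, where $T$ ranges over the affine maps $(x,G)\mapsto G$, $(x,G)\mapsto A(x)-G$, $(x,G)\mapsto B(x)-G$, and $(x,G)\mapsto \id-A(x)-B(x)+G$. Each such $T$ is a Carath\'eodory map -- continuous in $G$ for the weak-* topology and $\mu$-measurable in $x$ by hypothesis -- hence jointly measurable, and the positive cone of $\lh$ is weak-* closed, so each preimage is measurable. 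The Kuratowski--Ryll-Nardzewski selection theorem in its measurable-graph form (equivalently the Aumann/Jankov--von Neumann uniformization theorem, exploiting completeness of $\mu$) then produces a $\mu$-measurable selector $G_1\colon I\to \cal B$ with $G_1(x)\in F(x)$ for every $x$; the family $\{\G_x\}$ is then built from $G_1$ via the formulas in the first paragraph.

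The technical crux I expect is the joint measurability check coupled with matching the ambient topological structure to a standard selection theorem. The weak-* topology is exactly the right one here: it makes $\cal B$ compact and metrizable (hence Polish), keeps the positive cone closed, and renders the affine maps $G\mapsto T(x,G)$ continuous for each fixed $x$. Once these ingredients are assembled, the classical selection machinery applies off the shelf, and the equational definitions of $G_2$, $G_3$, $G_4$ automatically carry measurability through to the full joint observable, yielding both (a) and (b).
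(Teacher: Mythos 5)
Your proposal is correct and follows essentially the same route as the paper: both proofs reduce the lemma to producing a single $\mu$-measurable selection $x\mapsto G_1(x)\in\mathcal{C}(A(x),B(x))$, establish measurability of the set $\{(x,G):G\in\mathcal{C}(A(x),B(x))\}$ using weak-* closedness of the positive cone, and then invoke a standard measurable selection principle (the paper first passes to a co-null Borel set on which $A,B$ are Borel and cites Kadison--Ringrose Theorems 14.3.5--14.3.6, while you obtain joint measurability via a Carath\'eodory-type argument and cite the Aumann/Jankov--von Neumann form, for which the needed completeness and $\sigma$-finiteness of $\mu$ are indeed available since $I$ is $\sigma$-compact and $\mu$ is finite on compacts). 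The only cosmetic repair: on the exceptional null set you should declare $\G_x$ to be some fixed four-outcome observable rather than build it from the formulas of \eqref{eq:G}, since there $\id-A(x)-B(x)+G_1(x)$ need not be positive; the paper's ``define it to be zero outside $\tilde I$'' requires the same silent adjustment.
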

\begin{proof} Since $\cal B$ is separable, there exists a Borel $\mu$-null set $N$ such that $I'\ni x\mapsto A(x)$ and $I'\ni x\mapsto B(x)$ are Borel functions, where $I'=I\setminus N$. 
Then the sets
\begin{align*}
& \{ (x,E)\in I'\times \cal B\mid A(x)-E\geq 0\},\\
&\{ (x,E)\in I'\times \cal B\mid B(x)-E\geq 0\},\\
&\{ (x,E)\in I'\times \cal B\mid E+\id-A(x)-B(x)\geq 0\},\\
 & \{ (x,E)\in I'\times \cal B\mid E\geq 0\},
\end{align*}
are Borel subsets of $I'\times B$ because the cone of positive operators is closed in $\cal B$. Hence also their intersection
$$\{ (x,E)\in I'\times \cal B\mid E\in \mathcal{C}(A(x),B(x))\}$$ is a Borel subset of $I'\times \cal B$.

Let $\tilde N\subset I$ be a Borel $\mu$-null set such that $A(x)\coex B(x)$ for all $x\in I\setminus \tilde N$. Then $\mathcal{C}(A(x),B(x))\neq \emptyset$ for all $x\in \tilde{I}:=I\setminus (N\cup \tilde N)$. Due to the assumptions on $I$, as well as the fact that $\cal B$ is a Polish space, we can now apply the measurable selection principle of \cite{FTOAII86} (see Theorems 14.3.5 and 14.3.6) to establish the existence of a $\mu$-measurable function $\tilde{I} \ni x\mapsto E(x)\in \cal B$, such that $E(x)\in\mathcal{C}(A(x),B(x))$ for each $x\in \tilde{I}$. Since $\tilde{I}$ is Borel, and $I\setminus \tilde{I}$ is $\mu$-null, this extends to a $\mu$-measurable effect-valued function on $I$ (by defining it to be zero outside $\tilde I$), giving rise to the family of observables $\G_x$ with the desired properties.
\end{proof}

\subsection{Coexistence of decomposable effects in direct integral spaces}\label{dirintsec}

There is an explicit way of formulating the setting of Lemma \ref{comcoex}, since any abelian von Neumann algebra is unitarily equivalent to the algebra of diagonal operators on some \emph{direct integral decomposition} of $\hi$. Our basic reference for this appendix is \cite{FTOAII86}. Even though we only use direct integrals over the real line in the main result, the characterisation of coexistence in direct integral spaces is independently interesting, and so we prove it for the general case (Prop. \ref{dirintcoex} below).

Most of the technical assumptions listed below are needed in order to ensure the existence of certain measurable selections without which the direct integral space does not behave well. Since we explicitly use such a selection below, we need to be careful on details.

Let $I$ be a $\sigma$-locally compact complete separable metric space, $\mu$ the completion of a Borel measure on $I$, and $\hi_x$ a Hilbert space for each $x\in I$. Suppose that to each $\varphi\in \hi$ there exists a function $I\ni x\mapsto \varphi(x)$ such that $\varphi(x)\in \hi_x$ for all $x$,
and
\begin{itemize}
\item[(i)] $x\mapsto \langle \psi(x)|\varphi(x)\rangle$ is $\mu$-measurable integrable function for all $\psi,\varphi\in \hi$, and
\item[(ii)] if $h(x)\in \hi_x$ for all $x\in I$ and $x\mapsto \langle \psi(x)|h(x)\rangle$ is $\mu$-measurable integrable function for each $\psi\in \hi$, then there exists a $\varphi\in \hi$ such that $\varphi(x)=h(x)$ for almost all $x\in I$.
\end{itemize}
Then we denote
$$
\hi = \int_I^{\oplus} \hi_x \, d\mu(x),
$$
and call this the decomposition of $\hi$ into a \emph{direct integral} of Hilbert spaces $\hi_x$. An operator $A\in \lh$ is \emph{decomposable} (with respect to this direct integral) when for each $x\in I$ there exists $A(x)\in \mathcal{L}(\hi_x)$, such that $(A\varphi)(x)=A(x)\varphi(x)$ for almost all $x\in I$.

Note that if $A$ is decomposable, the map $x\mapsto \langle \psi(x)|A(x)\varphi(x)\rangle$ coincides almost everywhere to the measurable and integrable function $x\mapsto \langle \psi(x)|(A\varphi)(x)\rangle$ (but is not required to be itself measurable.) On the other hand, given a family of bounded operators $A(x)$, $x\in I$, such that the essential supremum of $\|A(x)\|$ is finite, and $x\mapsto \langle \psi(x)|A(x)\varphi(x)\rangle$ is measurable for all $\psi,\varphi\in \hi$, it follows from the definition (ii) that there is a decomposable operator $A\in \lh$ such that for all $\varphi\in \hi$ we have $(A\varphi)(x)=A(x)\varphi(x)$ for almost all $x\in I$. A decomposable operator $A$ is positive if and only if $A(x)\geq 0$ for almost all $x\in I$.

We also need the following fact, which says that the ``fibers'' $\hi_x$ can be identified as subspaces of some separable Hilbert space $\mathcal{K}$. That is, there exists a family of isometries $U_x:\hi_x\to \mathcal{K}$, such that $x\mapsto U_x\varphi(x)$ is measurable for each $\varphi\in \hi$, and $x\mapsto U_xS(x)U_x^*$ is measurable for each decomposable operator $A$. Here measurability is with respect to the Borel structures of the norm topology of $\mathcal{K}$ and the ultraweak topology of $\mathcal{L}(\mathcal{K})$. By using separability, it can be shown that this is equivalent to the measurability of all scalar valued maps $x\mapsto \langle \eta |U_x\varphi(x)\rangle$ and $x\mapsto \langle \eta |U_xS(x)U_x^*\eta'\rangle$ for $\eta,\eta'\in \mathcal{K}$ (see \cite[p. 1019]{FTOAII86}).  We fix $\mathcal{K}$, and the isometries $U_x$ for the remainder of this appendix.

The basic result concerning coexistence is the following:
\begin{proposition}\label{dirintcoex} Let $E_1$ and $E_2$ be two decomposable effects on the direct integral Hilbert space $\hi$. Then $E_1$ and $E_2$ are coexistent if and only if $E_1(x)$ and $E_2(x)$ are coexistent effects on $\hi_x$ for $\mu$-almost all $x\in I$. In that case, $E_1$ and $E_2$ have a joint four-outcome observable whose effects are all decomposable.
\end{proposition}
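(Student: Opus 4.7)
The plan is to leverage the two technical lemmas already at hand. The set of all decomposable operators on $\hi$ forms a von Neumann algebra whose commutant is precisely the abelian algebra of diagonal multiplication operators $\varphi \mapsto f\varphi$ with $f\in L^\infty(I,\mu)$; call this algebra $\mathcal{A}$. Hence Lemma \ref{comcoex} is directly applicable to $\mathcal{A}$, which handles the forward direction, while Lemma \ref{mselectionlemma} combined with the isometries $U_x:\hi_x\to \mathcal{K}$ will give the converse.

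For the forward direction, assume $E_1\coex E_2$. Since both lie in $\mathcal{A}$, Lemma \ref{comcoex} produces a joint four-outcome observable $G_1,G_2,G_3,G_4\in \mathcal{A}$ satisfying \eqref{eq:G}; this already yields the final assertion of the proposition. Choosing measurable fibers $G_i(x)$, the operator identities \eqref{eq:G} descend to fiberwise identities for $\mu$-almost all $x\in I$, and positivity of each $G_i$ as a decomposable operator forces $G_i(x)\geq 0$ for $\mu$-almost every $x$. The resulting fiberwise four-outcome observables witness $E_1(x)\coex E_2(x)$ almost everywhere.

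For the converse, I would transport the problem into the ambient separable Hilbert space $\mathcal{K}$ via the fixed isometries $U_x$. Set $A(x):=U_xE_1(x)U_x^*$ and $B(x):=U_xE_2(x)U_x^*$, extended by zero off $U_x(\hi_x)$. Decomposability of $E_1,E_2$ together with the defining property of the $U_x$'s ensures that $x\mapsto A(x)$ and $x\mapsto B(x)$ are $\mu$-measurable $\mathcal{L}(\mathcal{K})$-valued effect functions, and coexistence passes through the isometries, so $A(x)\coex B(x)$ almost everywhere. Lemma \ref{mselectionlemma} then delivers a $\mu$-measurable family $x\mapsto \tilde G_i(x)$ of joint four-outcome observables in $\mathcal{L}(\mathcal{K})$. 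Pulling back by $G_i(x):=U_x^*\tilde G_i(x)U_x\in \mathcal{L}(\hi_x)$, the measurability criterion (ii) for the direct integral, combined with $\|G_i(x)\|\leq 1$, shows that the $G_i(x)$ assemble into decomposable effects $G_i\in \lh$. Finally, the fiberwise identities $\tilde G_1(x)+\tilde G_2(x)=A(x)$, etc., lift to operator identities on $\hi$, yielding the desired decomposable joint observable for $E_1$ and $E_2$.

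The principal obstacle is the last step of the converse: verifying that the $\mathcal{L}(\mathcal{K})$-valued measurable selection really descends to decomposable operators on $\hi$. This relies on a careful interplay between ultraweak measurability of $x\mapsto \tilde G_i(x)$, norm measurability of $x\mapsto U_x\varphi(x)$, and the intrinsic characterization of decomposability recalled immediately before the proposition; one must also keep track of the several $\mu$-null exceptional sets arising from the almost-everywhere hypotheses, from Lemma \ref{mselectionlemma}, and from the identification of decomposable operators only up to null fibers. Once the measurability is properly established, assembling the $G_i$ and checking \eqref{eq:G} is then routine.
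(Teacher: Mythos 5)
Your proposal is correct and follows essentially the same route as the paper: Lemma \ref{comcoex} applied to the von Neumann algebra of decomposable operators (with its abelian diagonal commutant) gives the direction from coexistence to fiberwise coexistence together with the decomposable joint observable, while Lemma \ref{mselectionlemma} combined with the isometries $U_x$ and the characterization (ii) of the direct integral gives the converse. The only cosmetic difference is that you select and pull back a full four-outcome observable where the paper selects a single element of $\mathcal{C}(U_xE_1(x)U_x^*,U_xE_2(x)U_x^*)$, and the measurability issue you flag at the end is resolved exactly as you indicate (approximating $x\mapsto U_x\psi(x)$ pointwise by simple functions to get measurability of $x\mapsto\langle\psi(x)|E(x)\varphi(x)\rangle$).
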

\begin{proof} Let $\cal B$ be the unit ball of $\mathcal L(\mathcal K)$. Assume that $E_1(x)$ and $E_2(x)$ are coexistent for $\mu$-almost all $x\in I$.
Since the maps $I\ni x\mapsto U_xE_1(x)U_x^*\in \cal B$ and $I\ni x\mapsto U_xE_2(x)U_x^*\in B$ are measurable (see the discussion above), it follows from Lemma \ref{mselectionlemma} that there exists a measurable map $I\ni x\mapsto \tilde E(x)\in \cal B$, such that $\tilde E(x)\in \mathcal{C}(U_xE_1(x)U_x^*,U_xE_2(x)U_x^*)$ for almost all $x\in I$. Hence, $E(x):=U_x^*\tilde{E}(x)U_x\in \mathcal{C}(E_1(x),E_2(x))$ for almost all $x\in I$ (since $U_x^*U_x=\id$), and for each $\psi,\varphi\in \hi$, the bounded map
$$I\ni x\mapsto \langle \psi(x)|E(x)\varphi(x)\rangle= \langle U_x\psi(x)|\tilde{E}(x)U_x\varphi(x)\rangle\in \complex$$
is easily seen to be measurable (we can approximate the $\mathcal{K}$-valued measurable functions $x\mapsto U_x\psi(x)$ pointwise with simple functions). Hence, the family $E(x)$, $x\in I$, defines a bounded decomposable operator (note that $\|E(x)\|\leq 1$ for all $x\in I$). Since $E(x)\in \mathcal{C}(E_1(x),E_2(x))$ for almost all $x\in I$, it follows that $E\in \mathcal{C}(E_1,E_2)$. This implies that $E_1$ and $E_2$ are coexistent.

Suppose then that $E_1$ and $E_2$ are coexistent. Now the set of decomposable operators is a von Neumann algebra with Abelian commutant (the latter consisting of diagonal decomposable operators $f(x)\id_{\hi_x}$, where $f\in L^\infty(I,\mu)$); see \cite{FTOAII86}. Hence, by Lemma \ref{comcoex}, there exists a decomposable operator $E\in \mathcal{C}(E_1,E_2)$, (which defines the joint four-outcome observable for $E_1$ and $E_2$). By the definition of $\mathcal{C}(E_1(x),E_2(x))$, and the characterization of decomposable positive operators (see above) this implies that $E(x)\in \mathcal{C}(E_1(x),E_2(x))$ for almost all $x\in I$, that is, $E_1(x)$ and $E_2(x)$ are coexistent for almost all $x\in I$.
\end{proof}

We actually need a slightly different result, in the following setting. Let $\hi'$ be another Hilbert space. Then it is easy to see from the definition that the tensor product $\hi\otimes \hi'$ is also a direct integral, namely
\begin{equation}\label{dirinttensor}
\hi\otimes \hi'=\int_I^\oplus \hi_x\otimes \hi' \, d\mu(x).
\end{equation}
Now if $I\ni x\mapsto A(x)'\in \mathcal{B}(\hi')$ is a strongly (or equivalently, weakly) measurable family with $x\mapsto \|A(x)'\|$ being $\mu$-essentially bounded, then the family
$$x\mapsto \id_{\hi_x}\otimes A(x)'$$ defines a bounded decomposable operator on the above direct integral. This is because for any $\psi,\varphi\in \hi\otimes \hi'$, we have
\begin{equation*}
\langle \psi(x)|(\id_{\hi_x}\otimes A(x)')\varphi(x)\rangle= \langle (U_x\otimes \id_{\hi'}) \psi(x)|(\id_{\hi_x}\otimes A(x)') (U_x\otimes \id_{\hi'})\varphi(x)\rangle,
\end{equation*}
and we can then write $(U_x\otimes \id_{\hi'}) \psi(x)$ in a fixed basis of $\mathcal{K}\otimes \hi'$ with measurable components.

In the following proposition, we let $\mathcal{R}$ denote the set of all decomposable operators in the direct integral \eqref{dirinttensor}, and $\mathcal{R}_0\subset \mathcal{R}$ the set of those $A\in \mathcal{R}$ such that
$$
A(x) = \id_{\hi_x}\otimes A(x)' \text{ for almost all } x\in I,
$$
where $x\mapsto A(x)'$ is a measurable family with $x\mapsto \|A(x)'\|$ being $\mu$-essentially bounded.

\begin{proposition}\label{dirintcoextensor} Let $E_1,E_2\in \mathcal R_0$. Then $E_1$ and $E_2$ are coexistent if and only if $E_1(x)'$ and $E_2(x)'$ are coexistent effects on $\hi'$ for $\mu$-almost all $x\in I$. In that case, $E_1$ and $E_2$ have a joint four-outcome observable whose effects are all in $\mathcal{R}_0$.
\end{proposition}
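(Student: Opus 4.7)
The strategy is to reduce this statement to Prop.~\ref{dirintcoex} in one direction and to the measurable selection principle of Lemma~\ref{mselectionlemma} in the other. The tensor-identity structure of elements of $\mathcal{R}_0$ decouples the two factors, so the problem on $\hi_x\otimes\hi'$ collapses to a problem on $\hi'$, and the only real work is to track measurability in the parameter $x$.

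For the ``if'' direction, assume $E_1(x)'\coex E_2(x)'$ on $\hi'$ for $\mu$-almost all $x$. Since $E_1,E_2\in\mathcal{R}_0$, the maps $x\mapsto E_i(x)'$ are $\mu$-measurable families of effects on $\hi'$, so Lemma~\ref{mselectionlemma} (applied with $\hi$ there replaced by $\hi'$) yields a $\mu$-measurable selection $x\mapsto G(x)'\in\mathcal{C}(E_1(x)',E_2(x)')$. By the remark preceding the proposition, the family $x\mapsto \id_{\hi_x}\otimes G(x)'$ defines a bounded decomposable operator $G$ on $\hi\otimes\hi'$ belonging to $\mathcal{R}_0$. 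Since positivity of decomposable operators is verified fiberwise, the defining inequalities $0\le G(x)\le E_1(x)$, $G(x)\le E_2(x)$, and $E_1(x)+E_2(x)-\id\le G(x)$ hold almost everywhere, so $G\in\mathcal{C}(E_1,E_2)$ and thus $E_1\coex E_2$. The three other effects of the joint observable, namely $E_1-G$, $E_2-G$, and $\id-E_1-E_2+G$, are clearly of the form $\id_{\hi_x}\otimes(\,\cdot\,)$ fiberwise, hence in $\mathcal{R}_0$.

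For the ``only if'' direction, assume $E_1\coex E_2$. Applying Prop.~\ref{dirintcoex} to the direct integral representation \eqref{dirinttensor} yields that the decomposable effects $E_1(x)=\id_{\hi_x}\otimes E_1(x)'$ and $E_2(x)=\id_{\hi_x}\otimes E_2(x)'$ are coexistent on $\hi_x\otimes\hi'$ for $\mu$-almost all $x$ (on the $\mu$-null set where $\hi_x=\{0\}$ the claim is vacuous). For every such $x$, pick a unit vector $\eta_x\in\hi_x$ and set $V_x\varphi:=\eta_x\otimes\varphi$; this is an isometry $\hi'\to\hi_x\otimes\hi'$ satisfying $V_x^*(\id_{\hi_x}\otimes A)V_x=A$ for all $A\in\mathcal{L}(\hi')$. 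The compression $T\mapsto V_x^*TV_x$ is positive, unital, and preserves the defining inequalities of the coexistence set, so it maps $\mathcal{C}(E_1(x),E_2(x))$ into $\mathcal{C}(E_1(x)',E_2(x)')$. This produces an element of the latter set, proving $E_1(x)'\coex E_2(x)'$ for $\mu$-almost all $x$.

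The only subtle point is the forward direction, where one needs to know that the measurable selection provided by Lemma~\ref{mselectionlemma} really does assemble into a bounded decomposable operator on $\hi\otimes\hi'$; this is precisely what the identification of fibers with subspaces of a common separable Hilbert space $\mathcal{K}$ (discussed before the statement) is designed for, and it enters the proof only through the remark preceding the proposition. In the converse direction no measurability issue arises, since we only need existence of a joint observable fiber by fiber.
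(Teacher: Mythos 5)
Your proof is correct, but it arranges the ingredients differently from the paper. The paper gets everything out of a single application of Prop.~\ref{dirintcoex} to the direct integral \eqref{dirinttensor}: coexistence of $E_1,E_2$ is equivalent to fibrewise coexistence of $\id_{\hi_x}\otimes E_i(x)'$, and in the positive case Prop.~\ref{dirintcoex} already supplies a \emph{decomposable} $E$ with $E(x)\in\mathcal{C}(E_1(x),E_2(x))$ a.e.; the paper then compresses this $E$ with a \emph{measurable} field of isometries $V_x\varphi=\|\eta_0(x)\|^{-1}\eta_0(x)\otimes\varphi$ built from one fixed vector $\eta_0\in\hi$, so that $E(x)':=V_x^*E(x)V_x$ is simultaneously the witness for $E_1(x)'\coex E_2(x)'$ and, after tensoring back with $\id_{\hi_x}$, the $\mathcal{R}_0$-joint observable. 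You instead split the two tasks: in the ``if'' direction you invoke Lemma~\ref{mselectionlemma} afresh, directly on $\hi'$, to get a measurable selection $G(x)'\in\mathcal{C}(E_1(x)',E_2(x)')$ and assemble $\id_{\hi_x}\otimes G(x)'$ into an element of $\mathcal{C}(E_1,E_2)$ lying in $\mathcal{R}_0$ (this direction does not even need Prop.~\ref{dirintcoex}); in the ``only if'' direction you use only the ``only if'' half of Prop.~\ref{dirintcoex} plus a purely fibrewise, non-measurable compression with an arbitrarily chosen unit vector $\eta_x\in\hi_x$, which suffices because no object has to be glued back into an operator there. What your route buys is that you never need a measurable unit-vector field, i.e.\ you avoid the paper's implicit assumption that one can fix $\eta_0\in\hi$ with $\eta_0(x)\neq 0$ for $\mu$-almost all $x$ (equivalently that $\hi_x\neq\{0\}$ a.e., which you flag explicitly and which the paper leaves tacit); what the paper's route buys is economy, reusing the measurable selection already performed inside Prop.~\ref{dirintcoex} instead of invoking Lemma~\ref{mselectionlemma} a second time. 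Both arguments establish the statement, including the $\mathcal{R}_0$ form of the joint observable.
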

\begin{proof} It follows from Prop. \ref{dirintcoex} that $E_1$ and $E_2$ are coexistent if and only if $E_1(x)=\id_{\hi_x}\otimes E_1(x)'$ and $E_2(x)=\id_{\hi_x}\otimes E_2(x)'$ are coexistent for almost all $x\in I$, in which case there exists a decomposable operator $E$ with $E(x)\in \mathcal{C}(E_1(x),E_2(x))$ for almost all $x\in I$. Assuming now that this is the case, we fix $\eta_0\in \hi$, and define an isometry $V_x:\hi'\to \hi_x\otimes \hi'$ via
$V_x\varphi =\|\eta_0(x)\|^{-1}\eta_0(x)\otimes \varphi$, and set
$E(x)':=V_x^*E(x)V_x$ for all $x\in I$. Then we have, for each $\varphi,\psi\in \hi'$, that
$$
x\mapsto \langle \varphi |E(x)'\psi\rangle = \|\eta_0(x)\|^{-2}\langle \eta_0(x)\otimes \varphi |E(x) (\eta_0(x)\otimes \psi)\rangle
$$
is measurable (because $x\mapsto \eta_0(x)\otimes \varphi$ defines a vector in the direct integral), so $x\mapsto E(x)'$ is weakly, hence also strongly, measurable. Since $V_x^* \cdot V_x$ is a positive map, and $V_x^* (\id_{\hi_x}\otimes E_i(x)')V_x = E_i(x)'$ for all $x$, it follows that $E(x)'\in \mathcal{C}(E_1(x)',E_2(x)')$ for almost all $x$. Hence, $E_1(x)'$ and $E_2(x)'$ are coexistent for almost all $x$. Moreover, $\id_{\hi_x}\otimes E(x)'$ is clearly an element of $\mathcal{C}(E_1(x),E_2(x))$ for almost all $x$, so the decomposable operator $\tilde{E}\in \mathcal{R}_0$ given by $\tilde{E}(x)=\id_{\hi_x}\otimes E(x)'$ defines a joint four-outcome observable for $E_1$ and $E_2$. Assuming, conversely, that $E_1(x)'$ and $E_2(x)'$ are coexistent for almost all $x$, it trivially follows that $E_1(x)$ and $E_2(x)$ are coexistent for any such $x$.
\end{proof}

Finally, we will make use of the following form of the spectral theorem:
\begin{proposition}\label{spectraltheorem} Let $H$ be a bounded selfadjoint operator in some separable Hilbert space $\hi$. Then there exists a Borel measure $\mu$ on the interval $I=[-\|H\|,\|H\|]\subset \real$, and a family of Hilbert spaces $\{\hi_h\}_{h\in I}$, such that $\hi$ is the direct integral space
$$
\hi = \int_{I}^{\oplus} \hi_h \,d\mu(h),
$$
and $H$ is the diagonal multiplication operator $H(h)=h\id_{\hi_h}$. 
Moreover, the measure class\footnote{Two Borel measures belong to the same measure class if they have the same null sets.} of $\mu$ is uniquely determined, and the spectrum $\sigma(H)$ of $H$ coincides with the support of $\mu$, that is,
\begin{equation}\label{spectrumcharacterization}
h\in \sigma(H) \text{ if and only if }\,\mu((h-\epsilon,h+\epsilon))>0 \text{ for all } \epsilon>0.
\end{equation}
The abelian von Neumann algebra generated by $H$ coincides with the algebra of diagonal decomposable operators in this direct integral representation.
\end{proposition}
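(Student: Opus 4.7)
The plan is to derive this from the standard spectral theorem for bounded self-adjoint operators, first in its projection-valued-measure form and then recast in the direct integral language. Start from the spectral resolution $H = \int_I \lambda \, dE(\lambda)$, where $E$ is a projection-valued measure on $I=[-\|H\|,\|H\|]$. Separability of $\hi$ allows one to choose an at most countable sequence of vectors $\xi_1,\xi_2,\ldots$ (obtained recursively by taking maximal cyclic subspaces) such that $\hi=\bigoplus_n \hi^{(n)}$ with $\hi^{(n)}=\overline{\{f(H)\xi_n\mid f\in C(I)\}}$ mutually orthogonal and $H$-invariant. The scalar measures $\mu_n(X):=\|E(X)\xi_n\|^2$ can then be used to build unitaries $U_n:\hi^{(n)}\to L^2(I,\mu_n)$ intertwining $H|_{\hi^{(n)}}$ with multiplication by the coordinate function $\lambda$.

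Next, amalgamate the $\mu_n$ into a single measure $\mu$ in the same measure class, for instance $\mu=\sum_n 2^{-n}\mu_n(I)^{-1}\mu_n$ (omitting any $n$ with $\xi_n=0$), so that each $\mu_n$ is absolutely continuous with respect to $\mu$ with Radon-Nikodym density $\rho_n$. Fix an orthonormal basis $\{e_n\}$ of some separable Hilbert space $\hik_0$ and define the multiplicity space
$$
\hi_h := \overline{\mathrm{span}}\{e_n \mid \rho_n(h)>0\} \subset \hik_0,
$$
together with the assignment
$$
\hi \ni \varphi=\bigoplus_n \varphi_n \mapsto \varphi(\cdot)\in \hi_h, \quad \varphi(h) := \sum_n \rho_n(h)^{1/2}\,(U_n\varphi_n)(h)\,e_n.
$$
The measurability axioms (i) and (ii) of the direct integral then follow by countability of $\{e_n\}$ and standard Radon-Nikodym manipulations, and the definition makes $H$ act as multiplication by $\lambda$ on each fibre.

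For the remaining claims: the spectrum characterisation \eqref{spectrumcharacterization} is obtained by combining the standard identity $h\in\sigma(H)\iff E((h-\epsilon,h+\epsilon))\neq 0\text{ for all }\epsilon>0$ with the fact that the $\mu_n$ and $\mu$ share null sets, so $E(X)=0\iff \mu(X)=0$. The claim about diagonal decomposable operators is immediate in one direction (multiplication by any $t\in L^\infty(I,\mu)$ is in $\{H\}''$), and in the other direction follows from the standard functional calculus identification $\{H\}''\simeq L^\infty(I,\mu)$. Uniqueness of the measure class is then automatic: any other direct integral representation of the same $H$ yields a measure whose null sets coincide with those of the scalar spectral measure of every cyclic vector, hence with those of $\mu$.

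The main technical obstacle is the verification of axioms (i) and (ii) for the proposed direct integral structure, in particular ensuring that the multiplicity spaces $\hi_h$ vary measurably with $h$ and that every vector of $\hi$ is realised by a measurable section. This is why the Radon-Nikodym densities $\rho_n$ are introduced: they make the fibrewise dimension a measurable function of $h$ and let one identify the image of each $\hi^{(n)}$ with a measurable subbundle of the trivial bundle $I\times \hik_0$. Once this bookkeeping is in place, the rest of the proof reduces to invoking the classical spectral theorem and standard properties of direct integrals.
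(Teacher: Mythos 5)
The paper does not actually prove Proposition \ref{spectraltheorem}: it is quoted as a known form of the spectral theorem (the relevant background being the direct integral theory of Kadison--Ringrose, the paper's reference for Sect. \ref{dirintsec}), so there is no in-paper argument to compare against line by line. Your proposal supplies the standard textbook proof of exactly this statement, and it is essentially correct: decomposition into cyclic subspaces $\hi^{(n)}\simeq L^2(I,\mu_n)$, amalgamation of the scalar spectral measures into a single basic measure $\mu=\sum_n 2^{-n}\mu_n(I)^{-1}\mu_n$, and the Radon--Nikodym densities $\rho_n$ used to embed each $L^2(I,\mu_n)$ as a measurable subbundle of $I\times\hik_0$; the isometry $\|\varphi\|^2=\sum_n\int\rho_n|U_n\varphi_n|^2\,d\mu=\int\|\varphi(h)\|^2_{\hi_h}\,d\mu(h)$ and the intertwining with multiplication by $h$ both check out, and your arguments for \eqref{spectrumcharacterization}, for $W^*(H)$ being the diagonal algebra, and for uniqueness of the measure class are the standard ones. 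Two small points you gloss over but should acknowledge: (1) axiom (ii) of the paper's direct-integral definition only assumes weak integrability of $x\mapsto\langle\psi(x)|h(x)\rangle$, so showing that such a section comes from a vector requires the standard (closed-graph type) argument that it is automatically square integrable, not just the Radon--Nikodym bookkeeping; (2) uniqueness of the measure class only holds among representations whose fibres are nonzero $\mu$-almost everywhere — this does hold for your construction, since $\sum_n 2^{-n}\mu_n(I)^{-1}\rho_n=1$ $\mu$-a.e. forces some $\rho_n(h)>0$ a.e., but it is worth stating explicitly. With those standard verifications filled in, your route proves from first principles what the paper simply invokes.
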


\section{Discussion}

We have presented a general condition for the coexistence of a pair of effects belonging to some von Neumann algebra generated by two projections. This condition relies on a decomposition of given effects to blocks of two-by-two matrices and checking their coexistence, for which conditions are already known. By comparing our condition to the condition of generalized infimum \eqref{eq:ginf} we conclude that although \eqref{eq:ginf} provides a good sufficient condition, it is not a necessary condition. That is where our result proves useful.
In a particular exemplary case of scaled projections we found our condition to provide another advantage. Although the coexistence is in this case equivalent to \eqref{eq:ginf}, our results allow uncomplicated analysis.

Since the main result of this paper characterizes the coexistence of effects in an algebra generated by two projections, one can ask when a given pair of effects belongs to such algebra. Another natural question is if algebras generated by more than two projections also allow similar analysis.

Interestingly, the situation becomes rather intractable even with three projections. As shown in \cite{Davis55}, the full algebra $\lh$ corresponding to any separable Hilbert space is generated by three projections. This emphasises the special character of the algebra of two projections, which does not coincide with $\lh$ if $\dim\hi\geq 3$.
Indeed, if $\algebra=\lh$, then the central element $C$ defined in \ref{eq:centralC} must be a scalar multiple of the identity, but a short argument (p. 970 in \cite{Davis55}) shows that this cannot be true if $\dim\hi\geq 3$. In general, the structure of algebras generated by three or more projections is known only in some special cases \cite{Vasilevski98}.

After this negative result we can still ask whether every pair of effects belongs to the algebra generated by some pair of projections.
This would mean that we could always apply Prop.~\ref{prop:compatibilityvNm}, the  remaining problem would just be to find the projections in question.
However, it is easy to give examples of pair of effects that do not belong to any algebra generated by two projections.
For instance, fix two orthonormal bases $\{\phi_j\}_j$ and $\{\psi_k\}_k$ satisfying $\langle \phi_j|\psi_k\rangle \neq 0$ for all $j,k$.
Let $A$ and $B$ be two full rank effects that are diagonal with nondegenerate eigenvalues in the bases $\{\phi_j\}_j$ and $\{\psi_k\}_k$, respectively.
Then the spectral projections $|\phi_j\rangle\langle\phi_j|$ and $|\psi_k\rangle\langle \psi_k|$ are all in the von Neumann algebra generated by these effects, so also each rank one operator $|\phi_j\rangle\langle \psi_k|$ is in the algebra. 
Since any bounded operator is a weak-* limit of linear combinations of such operators, the von Neumann algebra generated by $A$ and $B$ is the whole $\lh$. 
Therefore, if $\dim\hi\geq 3$ there cannot be projections $P_1$ and $P_2$ such that $A,B\in\algebra$. 

We conclude that the results presented in this paper generalize the qubit coexistence characterization to a large set of effect pairs in all dimensions, and the introduced machinery provides an effortless tool to decide coexistence. 
However, we are still quite far from a complete characterization of all coexistent pairs of effects.

\section{Acknowledgement}
T.H. acknowledges support from the Academy of Finland (grant no. 138135).
J.K. and D.R. acknowledge support from European CHIST-ERA/BMBF project CQC. J.K. acknowledges additional support from the EPSRC project EP/J009776/1.

\end{document}